\theoremstyle{plain}
\newtheorem{theorem}{Theorem}[section]
\newtheorem{lemma}[theorem]{Lemma}
\newtheorem{proposition}[theorem]{Proposition}
\theoremstyle{definition}
\newtheorem{definition}[theorem]{Definition}
\theoremstyle{remark}
\newtheorem{remark}[theorem]{Remark}
\newcommand{\Z}{\mathbb{Z}}
\newcommand{\R}{\mathbb{R}}
\renewcommand{\H}{{\mathbb{H}}}
\newcommand{\F}{\mathcal{F}}
\newcommand{\co}{\colon\thinspace}
\newcommand{\del}{\partial}
\renewcommand{\a}{\alpha}
\renewcommand{\b}{\beta}
\newcommand{\ba}{\bar{\alpha}}
\newcommand{\bb}{\bar{\beta}}
\begin{document} 

\title{Watson-Crick pairing, the Heisenberg group and Milnor invariants}

\author{Siddhartha Gadgil}

\address{	Department of Mathematics\\
		Indian Institute of Science\\
		Bangalore 560003, India}

\email{gadgil@math.iisc.ernet.in}

\thanks{Partially supported by DST (under grant DSTO773) and UGC
(under SAP-DSA Phase IV)}

\date{\today}

\keywords{RNA secondary structure; stem-loop; Free groups; Milnor
invariants; lower central series}


\begin{abstract}
We study the secondary structure of RNA determined by Watson-Crick
pairing without pseudo-knots using Milnor invariants of links. We
focus on the first non-trivial invariant,  which we call
the Heisenberg invariant. The Heisenberg invariant, which is an integer,  
can be interpreted in
terms of the Heisenberg group as well as in terms of lattice paths.

We show that the Heisenberg invariant gives a lower bound on the
number of unpaired bases in an RNA secondary structure. We also show
that the Heisenberg invariant can predict \emph{allosteric structures}
for RNA. Namely, if the Heisenberg invariant is large, then there are
widely separated local maxima (i.e., allosteric structures) for the
number of Watson-Crick pairs found.

\end{abstract}

\maketitle

\section{Introduction}

Ribonucleic acid (RNA) is a nucleic acid polymer consisting of
nucleotide monomers, each of which is of one of four types determined
by the nucleotide base present in it. RNA plays a central role in
living cells, specifically in the synthesis of proteins using DNA as a
template. In addition, RNA itself can serve as an information carrier
and also has catalytic properties. Indeed
its ability to serve as both an information carrier and a catalyst has
led to speculation that life began as an \emph{RNA world}~\cite{RNA}.

The \emph{primary structure} of an RNA molecule is the sequence of
nucleotide bases in it. In addition to this, the properties of an RNA
molecule depend strongly on the \emph{secondary structure}, which
is the $3$-dimensional shape of the molecule~\cite{FW}.

The bases form two complementary pairs, with members of a pair forming
strong hydrogen bonds. Thus, if two subsequences of the RNA sequence
are complementary, a stable secondary structure called a \emph{stem
loop structure} can be formed by bonds between complementary pairs in
these subsequences. We study here these \emph{secondary structures} of
an RNA molecule, determined by the Watson-Crick pairing. We shall
consider structures without so called \emph{pseudo-knots}. For basic
concepts for RNA folding, we refer to~\cite{Ti} and~\cite{RNA}. For
surveys and other studies of RNA folding, we refer
to~\cite{RNA}--\cite{c14}. Henceforth by secondary structure we mean
the secondary structure determined by Watson-Crick pairing without
pseudo-knots.

We focus here on introducing new methods to yield a conceptual
understanding of RNA folding. The model we consider - Watson-Crick
pairing without pseudo-knots, is clearly an approximation in various
ways. Firstly, stereo-chemical forces do not allow very short
loops. Secondly, pseudo-knots are present in nature. Thus, allowing
short loops but not pseudo-knots is an approximation of stereo-chemical
forces. Further, we do not take into account the difference between
the strengths of the A-U and the G-C bonds, and also ignore the
non-Watson-Crick bonds.

Thus, our model is clearly not appropriate for the computational study
of individual RNA molecules. Our goal is rather to introduce new
methods for understanding RNA secondary structure, and show that these
are very powerful in the context of our model. One can extend these
methods to take into account more realistic models of RNA secondary
structure.

Our methods can be motivated by a simple observation -- if, for
example, there are more Adenine than Uracil bases, then some Adenine
bases must be unpaired. These considerations give a very elementary
lower bound on the number of unpaired bases. Mathematically, this can
be viewed as coming from abelianisation. Another context in which
abelianisation gives the simplest criteria is the \emph{linking
number} for classical links. In this paper, we show that one can adapt
Milnor's theory of \emph{higher linking numbers} to the context of
RNA.

We use a natural model for Watson-Crick pairing of RNA (without
pseudo-knots) in terms of the free group $F$ on two generators
$\alpha$ and $\beta$. An element of the free group is given by a word
in the four letters $\alpha$, $\bar{\alpha}$, $\beta$ and
$\bar{\beta}$, with $\ba$ and $\bb$ the inverses of $\a$ and $\b$
respectively. We identify these letters with the nucleotides Adenine,
Uracil, Guanine and Cytosine respectively. Under this identification,
an RNA molecule gives a string in the four letters $\alpha$, $\beta$,
$\ba$ and $\bb$.

Stem loops, which are the basic units of RNA secondary structure, then
correspond to words in the free group of the form $gl\bar{g}$, with
$g$ and $l$ words in the free group and $\bar{g}$ the inverse of $g$
(see figure~\ref{stemloop}). One may further have Watson-Crick pairing
within the word $l$, so that a subword of the RNA sequence may be of
the form $ga(hl\bar{h})b\bar{g}$ as in figure~\ref{second}. In
general, a secondary structure without pseudoknots consists of
pairings between letters and their inverses so that there is no
\emph{nesting}. We formalise such a structure (which we call a
\emph{folding}) in Definition~\ref{deffold}. The appropriate
\emph{energy}, whose local and global minima we study, is the number
of unpaired bases in the secondary structure. 

\begin{figure}
\input{stem.pstex_t}
\caption{A stem loop with bases labelled using both nucleotides and
letters in the free group}\label{stemloop}
\input{second.pstex_t}
\caption{An RNA secondary structure}\label{second}
\end{figure}

Mathematically, one can interpret the above as saying that the number
of unpaired bases is a \emph{conjugacy invariant norm}. For, a
secondary structure on RNA corresponding to the word $w$ gives one for
the word $gw\bar{g}$ (with the initial segment corresponding to $g$
paired with the final segment corresponding to $\bar{g}$) so that the
number of unpaired bases is the same. Further, given secondary
structures on strands of RNA corresponding to words $w_1$ and $w_2$,
we get a secondary structure on the strand corresponding to $w_1w_2$
(the concatentation of the words). This paper is based on the
observation that conjugacy invariance can be taken into account by
considering nilpotent quotients.

In analogy with Milnor invariants, we shall associate numbers to words
$g$ in the letters $\a$, $\b$, $\ba$ and $\bb$, which we call
\emph{invariants} according to terminology familiar in mathematics
(they are not to be taken as invariant in any biological sense). By
the above, it is desirable that they are invariant under conjugation,
or at least the extent to which they are not invariant can be
estimated. To achieve sub-additivity, we consider additive functions
of $g$ and take their absolute value, or more generally sums of the
absolute values of such functions.

As we have seen, an obvious lower bound for the number of unpaired
elements is given by comparing the number of letters that are $\alpha$
with the number that are $\ba$ and similarly for $\beta$ and
$\bar{\beta}$. We denote the difference between the number of letters
of a word $g$ that are $\a$ and the number that are $\ba$ by
$a(g)$. Similarly, $b(g)$ denotes the difference between the number of
letters that are $\b$ and the number that are $\bb$. In other words we
look at the image of $g$ under the abelianisation map
$ab:F\to\Z^2$. Then $a(g)$ and $b(g)$ denote the co-ordinates of
$ab(g)$ so that $ab(g)=a(g)ab(\alpha)+b(g)ab(\beta)$. The numbers
$a(g)$ and $b(g)$ are the first of our invariants.

Clearly, the minimum number of unpaired bases in $g$ is at least
$\vert a(g)\vert+\vert b(g)\vert$, but this is very far from sharp. We
proceed further in analogy with Milnor's theory of link homotopy. The
simplest invariants of a link are the linking numbers. These are given
by considering the image of a curve in the abelianisation of an
appropriate fundamental group, and are thus analogous to $a(g)$ and
$b(g)$. Milnor constructed \emph{higher linking numbers} by
considering appropriate nilpotent quotients.

We shall associate an appropriate link to RNA molecules in
Section~\ref{S:mil} and construct various invariants. The main focus
in this paper, however, is to construct and study the first of these
higher invariants. It is easiest to proceed with a direct algebraic
description. This description is in terms of another familiar object
-- the Heisenberg group. We thus call this invariant the
\emph{Heisenberg invariant} $\nu(g)$.  There is also a nice geometric
view of this invariant in terms of areas enclosed by \emph{lattice
paths}. We also provide an elementary combinatorial description in
Theorem~\ref{comp} which allows for easy computation.

We show that the Heisenberg invariant gives a lower bound on the
number of unpaired bases in an RNA molecule, and hence the potential
energy of a secondary structure.

We also show that the Heisenberg invariant is related to a
biologically significant property of Watson-Crick pairing. Namely, we
show that if $\nu(g)$ is \emph{sufficiently large}, then there are
\emph{allosteric structures}, i.e., local minima for the number of
unpaired bases that are \emph{widely separated}. This means that there
are two ways of \emph{folding} the sequence so that we cannot pass
from one to the other without significantly increasing the number of
unpaired bases (here a folding is an abstraction of the secondary
structure). Thus, the Heisenberg invariant as well as the higher
invariants should prove very fruitful in the study of RNA secondary
structures.

Our results depend on our simplified model. However, it is easy to see
that similar results continue to hold even if one makes the model
biologically more realistic by taking into account that nearby bases
do not pair (i.e., there are no very short loops). In the case of the
lower bound this is obvious, as further restrictions can only increase
the number of unpaired bases. The result concerning allosteric
structures also extends as we sketch following the proof of the result
for our model.

\section{The Heisenberg invariant}

The Milnor higher link invariants~\cite{Mi1}\cite{Mi2} are based on
the lower central series of a group. We recall some basic definitions.

Consider a group $G$. For elements $a,b\in G$, $\bar{a}$ denotes the
inverse of $a$ and $[a,b]$ denotes the commutator
$ab\bar{a}\bar{b}$. For subgroups $H_1,H_2\subset G$, we define
$[H_1,H_2]$ to be the normal subgroup generated by elements of the
form $[a,b]$, $a\in H_1$, $b\in H_2$. 

The lower central series of $G$ is defined inductively as follows. Let
$G_1=G$. If $G_n$ has been defined, we define $G_{n+1}$ to be
$[G,G_n]$. In particular, $G_2=[G,G]$ and the abelianisation of $G$ is
$G/G_2$. Note that $G_1\supset G_2\supset G_3\supset\dots$.

Consider now the free group $F$ generated by $\a$ and $\b$. Let $H$ be
the group $F/F_3$. The Heisenberg invariant is obtained by considering
the image $[g]\in H$ of an element $g\in F$. It is well known that
this is the Heisenberg group, which is the unique central extension of
$\Z^2$ by $\Z$. For the convenience of the reader, we prove these
properties below.

\begin{proposition}\label{exact}
There is an exact sequence
\begin{equation}\label{eqexact}
1\to \Z\to H\to \Z^2\to 1
\end{equation}
with the image of $\Z$ central in $H$ and generated by
$[\a,\b]$. Further, $H$ is isomorphic to the Heisenberg group.
\end{proposition}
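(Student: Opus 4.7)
The plan is to build the exact sequence by starting from the abelianization map $ab\co F\to \Z^2$, and then to identify the kernel of the induced map $H\to \Z^2$ as an infinite cyclic central subgroup generated by $[\a,\b]$. Since $F_3\subset F_2$, the abelianization $ab$ descends to a surjection $\pi\co H=F/F_3\to F/F_2=\Z^2$. By definition $\ker(\pi)=F_2/F_3$, so the only real content is showing that $F_2/F_3\cong \Z$, generated by the class of $[\a,\b]$, and that it lies centrally inside $H$. Centrality is immediate: by definition $F_3=[F,F_2]$, so for any $x\in F$ and any $c\in F_2$ we have $[x,c]\in F_3$, i.e.\ $[x]$ and $[c]$ commute in $H$.

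The key step is to show that $F_2/F_3$ is cyclic, generated by $[\a,\b]$. I would establish the standard bilinearity identities for commutators modulo $F_3$:
\begin{equation*}
[uv,w]\equiv [u,w][v,w]\pmod{F_3},\qquad [u,vw]\equiv [u,v][u,w]\pmod{F_3},
\end{equation*}
which follow from the Hall--Witt identity together with the fact that any triple commutator lies in $F_3$. Iterating these identities on the four generators $\a^{\pm 1},\b^{\pm 1}$, any basic commutator $[u,v]$ with $u,v\in F$ reduces modulo $F_3$ to $[\a,\b]^{a(u)b(v)-b(u)a(v)}$. Since $F_2$ is generated as a group by such commutators, the image of $[\a,\b]$ generates $F_2/F_3$.

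To show that $[\a,\b]$ has infinite order in $H$, and simultaneously identify $H$ with the Heisenberg group, I would construct the classical representation. Let $\mathrm{Heis}$ denote the group of $3\times 3$ upper-triangular unipotent integer matrices. The universal property of the free group provides a homomorphism $\phi\co F\to \mathrm{Heis}$ sending $\a,\b$ to the two standard elementary matrices. A direct matrix computation shows $\phi([\a,\b])$ is the non-trivial central generator of $\mathrm{Heis}$, and since $\mathrm{Heis}$ is nilpotent of class $2$, we have $\phi(F_3)=1$, so $\phi$ factors through a surjection $\bar\phi\co H\to\mathrm{Heis}$. In particular $[\a,\b]$ has infinite order in $H$, confirming $F_2/F_3\cong \Z$.

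Finally, the exact sequence (\ref{eqexact}) follows from the above, and $\bar\phi$ sits in a commutative diagram of short exact sequences
\begin{equation*}
\begin{array}{ccccccccc}
1 & \to & \Z & \to & H & \to & \Z^2 & \to & 1 \\
  &     & \downarrow &  & \downarrow\bar\phi & & \downarrow & & \\
1 & \to & \Z & \to & \mathrm{Heis} & \to & \Z^2 & \to & 1
\end{array}
\end{equation*}
in which the outer vertical maps are isomorphisms (the one on $\Z$ because both are generated by the image of $[\a,\b]$, the one on $\Z^2$ by construction), so the short five lemma gives $H\cong \mathrm{Heis}$. The main obstacle in all of this is the bilinearity argument: one must be careful that the commutator identities hold modulo $F_3$ (not merely in some abelian quotient) and use them systematically to reduce an arbitrary element of $F_2$ to a power of $[\a,\b]$.
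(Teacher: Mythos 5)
Your proof is correct, and its overall skeleton matches the paper's: descend the abelianization to a surjection $H\to\Z^2$, show the kernel is central cyclic generated by $[\a,\b]$, and use the integer matrix Heisenberg group both to rule out torsion and to get the isomorphism. Where you genuinely diverge is in the identification of the kernel. You prove $F_2/F_3$ is generated by $[\a,\b]$ via commutator calculus: bilinearity of $[\cdot,\cdot]$ modulo $F_3$ (Hall--Witt) reduces every commutator $[u,v]$ to $[\a,\b]^{a(u)b(v)-b(u)a(v)}$, and these generate $F_2$. The paper instead invokes the presentation $\Z^2=\langle \a,\b\mid[\a,\b]\rangle$, i.e.\ that $\ker(ab)=F_2$ is the \emph{normal closure} of $[\a,\b]$ in $F$, and then observes that the normal closure of a central element of $H$ is just the cyclic subgroup it generates; this sidesteps the bilinearity computation entirely, at the cost of quoting that normal-closure fact. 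Your centrality argument is actually cleaner than the paper's: you note $[F,F_2]=F_3$ kills all commutators of $F$ with $F_2$, whereas the paper only checks $[[\a,\b],\a]$ and $[[\a,\b],\b]$. Finally, you close with the short five lemma on the map of extensions where the paper verifies injectivity of $\psi$ by hand; these are equivalent, and your version makes explicit that the isomorphism on kernels comes from matching generators. Both routes are sound; yours is more self-contained but requires care with the mod-$F_3$ identities, as you yourself flag.
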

\begin{proof}
As $F_3\subset F_2$, the abelianisation homomorphism $ab:F\to
\Z^2=F/F_2$ induces a surjective homomorphism $\varphi:H=F/F_3\to
F/F_2=\Z^2$. It is well known that the kernel of $ab:F\to \Z^2$ is the
normal subgroup in $F$ generated by $[\a,\b]$. Hence the kernel of
$\varphi$ is the normal subgroup in $H$ generated by the equivalence
class of $[\a,\b]$ in $H$, which we continue to denote by $[\a,\b]$.

Note that the commutators $[[\a,\b],\a]$ and $[[\a,\b],\b]$ are both
elements of $F_3$, and hence have trivial images in $H$. It follows
that $[\a,\b]\in H$ commutes with the images of $\a$ and $\b$ in $H$
and hence is central. Thus, as the kernel of $\varphi$ is the normal
subgroup generated by $\varphi$, it is in fact the cyclic group
generated by $[\a,\b]$. To complete the proof of the exact sequence of
Equation~\ref{eqexact}, it suffices to show that no power of
$[\a,\b]$ is trivial in the group $H$.

We show this by constructing an explicit homomorphism from $H$ to the
Heisenberg group $\mathbb{H}$ with integer coefficients, namely the
group of matrices of the form
\begin{equation}
M(a,b,c)=\left(\begin{array}{ccc} 1 & a & c\\ 0 & 1 & b\\ 0 & 0 &
1\end{array}\right)
\end{equation}
with $a$, $b$ and $c$ integers.

The homomorphism $\psi:H\to \mathbb{H}$ is defined as follows. Let
$\Psi\co F\to\mathbb{H}$ be the unique homomorphism taking $\a$ and
$\b$ to the matrices $M(1,0,0)$ and $M(0,1,0)$. By a well known (and
straightforward) computations, $\Psi([\a,\b])= M(0,0,1)$, and
$M(0,0,1)$ is central in $\mathbb{H}$. Hence $\Psi(F_3)$ is the
trivial group.  Thus, we get a well-defined homomorphism $\psi\co
H=F/F_3\to \mathbb{H}$, which is clearly surjective. As
$\psi([\a,\b])=M(0,0,1)$, it follows that
$\psi([\a,\b]^k)=M(0,0,k)$. Hence if $k\neq 0$, $[\a,\b]^k$ is
non-trivial as an element of $H$. This shows that the sequence of
Equation~\ref{eqexact} is exact.

Finally, we show that $\psi$ is injective, hence an
isomorphism. Suppose $\psi(g)=1$. Then as $\psi$ gives an isomorphism
on the abelianisations of $H$ and $\H$, it follows that
$ab(g)=0$. Hence by the exact sequence, $g=[a,b]^k$ for some
$k\in\Z$. It follows that $\psi(g)=M(0,0,k)$, hence $\psi(g)=1\implies
k=0\implies g=1$.
\end{proof}

The homomorphisms $a\co F\to \Z$ and $b\co F\to \Z$ defined on $F$
factor through $H$ (as $F_3\subset F_2$), and we continue to denote
them by $a(\cdot)$ and $b(\cdot)$. We shall also denote the images of
$\a$ and $\b$ in $H$ by $\a$ and $\b$.

It is easy to see that for an element $g\in H$, $a(g)$ and $b(g)$ are
the entries $a$ and $b$ of $\psi(g)$. We can define the Heisenberg
invariant in terms of the remaining entry $c$. However, it will be
convenient to take a different approach based on the following
proposition (which is a special case of normal forms that are well
known in the literature).

\begin{proposition}\label{rep}
Any element $g\in H$ can be uniquely expressed as 
$$g=[\a,\b]^\nu \a^a \b^b$$
with $\nu$, $a$ and $b$ integers.
\end{proposition}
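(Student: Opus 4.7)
The plan is to transfer the question to the concrete Heisenberg group $\H$ via the isomorphism $\psi\co H\to\H$ established in Proposition~\ref{exact}, where existence and uniqueness become a short matrix computation.

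First I would record the image under $\psi$ of any word of the form $[\a,\b]^\nu\a^a\b^b$. Since $\psi([\a,\b])=M(0,0,1)$, $\psi(\a)=M(1,0,0)$, $\psi(\b)=M(0,1,0)$, and the multiplication rule in $\H$ reads
$$M(a_1,b_1,c_1)\cdot M(a_2,b_2,c_2)=M(a_1+a_2,\,b_1+b_2,\,c_1+c_2+a_1 b_2),$$
a direct computation yields
$$\psi\bigl([\a,\b]^\nu\a^a\b^b\bigr)=M(a,\,b,\,\nu+ab).$$

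For existence, given $g\in H$ let $\psi(g)=M(a,b,c)$ and set $\nu:=c-ab$. The formula above gives $\psi([\a,\b]^\nu\a^a\b^b)=M(a,b,c)=\psi(g)$, and the injectivity of $\psi$ (Proposition~\ref{exact}) forces $g=[\a,\b]^\nu\a^a\b^b$. For uniqueness, suppose $[\a,\b]^{\nu_1}\a^{a_1}\b^{b_1}=[\a,\b]^{\nu_2}\a^{a_2}\b^{b_2}$. Applying $\psi$ and comparing entries of the resulting matrices gives $a_1=a_2$, $b_1=b_2$, and $\nu_1+a_1b_1=\nu_2+a_2b_2$, so $\nu_1=\nu_2$ as well.

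There is essentially no obstacle here: once Proposition~\ref{exact} has identified $H$ with $\H$, the normal form is just a choice of ordered coordinates on $\H$, and the only thing to verify is the multiplication formula above, which is a routine matrix calculation. The one minor subtlety worth flagging is that the parameter $\nu$ in the normal form is not the third matrix coordinate $c$ itself but rather the ``twist'' $c-ab$, a discrepancy that reflects the fact that the factors $\a^a$ and $\b^b$ do not commute in $H$.
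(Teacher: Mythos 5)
Your proof is correct, and it takes the route the paper explicitly sets aside: the paper remarks just before this proposition that one \emph{could} define the normal form via the third matrix entry $c$ of $\psi(g)$, ``however, it will be convenient to take a different approach.'' The paper's own argument stays inside the exact sequence $1\to\Z\to H\to\Z^2\to 1$: it forms $h=g\b^{-b(g)}\a^{-a(g)}$, observes that $h$ lies in the kernel of the abelianisation, and invokes the fact that this kernel is the infinite cyclic central subgroup generated by $[\a,\b]$ to get $h=[\a,\b]^\nu$ for a unique $\nu$; uniqueness of $a$ and $b$ then comes from abelianising. Your version instead pushes everything through the isomorphism $\psi\co H\to\H$ and solves for coordinates; the multiplication rule and the identity $\psi([\a,\b]^\nu\a^a\b^b)=M(a,b,\nu+ab)$ check out, and your appeal to the injectivity of $\psi$ (which Proposition~\ref{exact} does establish) is legitimate. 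What your approach buys is the explicit formula $\nu(g)=c-ab$ relating the Heisenberg invariant to the matrix entries --- a formula the paper alludes to but never writes down --- at the cost of using the full isomorphism $\psi$ rather than just the exactness of the sequence. The paper's approach is slightly more intrinsic and generalises more readily to higher nilpotent quotients where no such convenient matrix model is at hand. Both proofs are complete.
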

\begin{proof}
Let $h=g\b^{-b(g)}\a^{-a(g)}$. Then the image under the abelianisation
map $\varphi(h)$ of $h$ is trivial. Hence $h$ is in the kernel of
$\varphi$. By Proposition~\ref{exact}, $h=[a,b]^\nu$ for a unique
$\nu$. Hence it follows that $g=[\a,\b]^\nu \a^a \b^b$ with $a=a(g)$
and $b=b(g)$.

To see uniqueness, observe by abelianising that if $g=[\a,\b]^\nu \a^a
\b^b$, we must have $a=a(g)$ and $b=b(g)$. Further, if
$h=g\b^{-b}\a^{-a}$, then $\nu$ is the unique integer such that
$h=[a,b]^\nu$, and hence is determined by $g$.
\end{proof}

\begin{definition}
The \emph{Heisenberg invariant} of $g\in H$ is the unique $\nu$ such
that $g=[\a,\b]^\nu \a^a \b^b$ with $a$ and $b$ integers. The
Heisenberg invariant of a word in $F$ is the Heisenberg invariant of
the image of the word in $H$. 
\end{definition}

In terms of the above definition, the representation of
Proposition~\ref{rep} can be expressed as
\begin{equation}\label{eqrep}
g=\a^{a(g)}\b^{b(g)}[\a,\b]^{\nu(g)}
\end{equation}
The higher link invariants are defined only for links with trivial
linking number. In our situation, we have defined the Heisenberg
invariant for all words. Nevertheless, it should be regarded as
well-defined up to an error given by the abelianisation
$(a(g),b(g))$. For instance, in Proposition~\ref{rep}, we can
interchange the order of $\beta$ and $\alpha$ in the representation of
$g$. An analogous result still holds but we get a different value of
the Heisenberg invariant.

We have the following simple properties of the Heisenberg
invariant. In what follows all equalities are to be understood to be
in the group $H$.

\begin{proposition}
If $g_1,g_2\in H$ have trivial abelianisations,
$\nu(g_1g_2)=\nu(g_1)\nu(g_2)$.
\end{proposition}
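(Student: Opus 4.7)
The plan is to reduce everything to the normal form from Proposition~\ref{rep} and then exploit the fact that $[\a,\b]$ is central in $H$.

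First, I would apply Proposition~\ref{rep} (equivalently, Equation~\ref{eqrep}) to each $g_i$ separately. Since $g_i$ has trivial abelianisation, $a(g_i)=b(g_i)=0$, and the representation collapses to
\begin{equation*}
g_i = [\a,\b]^{\nu(g_i)}, \qquad i=1,2.
\end{equation*}
Next, I would multiply these expressions. By Proposition~\ref{exact}, $[\a,\b]$ is central in $H$, so powers of $[\a,\b]$ form a cyclic subgroup isomorphic to $\Z$, and in particular
\begin{equation*}
g_1 g_2 = [\a,\b]^{\nu(g_1)}\,[\a,\b]^{\nu(g_2)} = [\a,\b]^{\nu(g_1)+\nu(g_2)}.
\end{equation*}

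To conclude, I would observe that the abelianisation $\varphi\co H \to \Z^2$ is a homomorphism, so $\varphi(g_1 g_2) = \varphi(g_1)+\varphi(g_2) = 0$, meaning $g_1g_2$ also has trivial abelianisation. Hence $a(g_1g_2)=b(g_1g_2)=0$, and the uniqueness clause of Proposition~\ref{rep} identifies $\nu(g_1 g_2)$ as the unique integer $\nu$ with $g_1g_2 = [\a,\b]^\nu$. Comparing with the displayed equation above gives $\nu(g_1g_2) = \nu(g_1)+\nu(g_2)$, which is the asserted identity (understood additively, as befits an integer-valued invariant of a product in a central cyclic subgroup).

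There is no real obstacle here: the entire content of the proposition is that on the subgroup of $H$ where both coordinates of the abelianisation vanish, $\nu$ is precisely the isomorphism with $\Z$ supplied by Proposition~\ref{exact}. The only subtlety to be careful about is invoking centrality of $[\a,\b]$ to justify collapsing the product of two powers into a single power without worrying about order, which is exactly what Proposition~\ref{exact} provides.
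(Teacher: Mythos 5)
Your proof is correct and follows essentially the same route as the paper: both write $g_i=[\a,\b]^{\nu(g_i)}$ using the identification of the kernel of the abelianisation with the central cyclic subgroup generated by $[\a,\b]$, multiply, and read off the exponent. You are also right that the conclusion is the additive identity $\nu(g_1g_2)=\nu(g_1)+\nu(g_2)$ (the product in the statement is a typo), which is exactly what the paper's own proof establishes.
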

\begin{proof}
As $g_1$ and $g_2$ have trivial abelianisation, by
Proposition~\ref{exact} $g_i=[\a,\b]^{k_i}$, $i=1,2$, for some
integers $k_i$. By definition, $\nu(g_i)=k_i$. Further,
$g_1g_2=[\a,\b]^{k_1+k_2}$, hence
$\nu(g_1g_2)=k_1+k_2=\nu(g_1)+\nu(g_2)$.
\end{proof}

\begin{proposition}
If $g\in H$ has trivial abelianisation, then for $h\in H$,
$\nu(hg\bar h)=\nu(g)$.
\end{proposition}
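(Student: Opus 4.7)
The plan is to exploit the centrality of the commutator $[\a,\b]$ in $H$, which was established in Proposition~\ref{exact}. Once we observe that elements with trivial abelianisation are exactly the powers of $[\a,\b]$, conjugation becomes trivial.

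First, I would invoke Proposition~\ref{exact} (the exact sequence) to note that since $g$ has trivial abelianisation, $g$ lies in the kernel of $\varphi\co H\to\Z^2$, and hence $g=[\a,\b]^k$ for some integer $k$. By the definition of the Heisenberg invariant, $\nu(g)=k$.

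Next, I would recall that Proposition~\ref{exact} also showed $[\a,\b]$ is central in $H$. Therefore $[\a,\b]^k$ is central as well, and for any $h\in H$ we compute
\begin{equation*}
hg\bar h = h[\a,\b]^k\bar h = [\a,\b]^k h\bar h = [\a,\b]^k = g.
\end{equation*}
In particular $hg\bar h$ has trivial abelianisation and equals $g$ in $H$, so $\nu(hg\bar h)=k=\nu(g)$.

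There is no real obstacle here: the whole content has been packaged into Proposition~\ref{exact}, and the proof is essentially the observation that a central element is fixed under conjugation. The only thing worth flagging is that one must pass through the exact sequence in order to justify writing $g$ as a power of $[\a,\b]$ — without this step, the equality $\nu(g)=k$ is not available from the definition alone.
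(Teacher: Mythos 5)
Your proof is correct and follows essentially the same route as the paper: both use Proposition~\ref{exact} to identify $g$ with a power of the central element $[\a,\b]$, so that $hg\bar h=g$ and the invariance of $\nu$ is immediate. Your version merely spells out the intermediate step $\nu(g)=k$ explicitly, which the paper leaves implicit.
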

\begin{proof}
By Proposition~\ref{exact}, as $g$ is in the kernel of the
abelianisation homomorphism $\varphi$, $g$ is central. Thus,
$hg\bar{h}=g$
\end{proof}

For the sake of clarity, we shall focus on elements  $g\in H$ with
trivial abelianisation. However, it is easy to obtain variants of all
our results allowing for errors determined by $a(g)$ and $b(g)$ using
the following proposition.

\begin{proposition}\label{unbal}
For $k\in\Z$, the following identities hold.
\begin{enumerate}
\item $\nu(\a^kg)=\nu(g)$
\item $\nu(g\b^k)=\nu(g)$
\end{enumerate}
\end{proposition}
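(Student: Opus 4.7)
The plan is to prove both identities by direct manipulation of the normal form from Proposition~\ref{rep}, exploiting the fact that $[\a,\b]$ is central in $H$ (established in Proposition~\ref{exact}). Recall that Equation~\ref{eqrep} gives us
\begin{equation*}
g = \a^{a(g)}\b^{b(g)}[\a,\b]^{\nu(g)},
\end{equation*}
and by uniqueness in Proposition~\ref{rep}, to show $\nu(\a^k g) = \nu(g)$ it suffices to exhibit $\a^k g$ in the form $\a^{a'} \b^{b'} [\a,\b]^{\nu(g)}$ for some integers $a'$, $b'$.

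For part (1), I would simply left-multiply the normal form of $g$ by $\a^k$. Since all the factors that matter are powers of $\a$ and a central element, this gives
\begin{equation*}
\a^k g = \a^k \cdot \a^{a(g)}\b^{b(g)}[\a,\b]^{\nu(g)} = \a^{a(g)+k}\b^{b(g)}[\a,\b]^{\nu(g)},
\end{equation*}
which is already in the normal form of Equation~\ref{eqrep}. By uniqueness we read off $\nu(\a^k g) = \nu(g)$ (and incidentally $a(\a^k g) = a(g)+k$, $b(\a^k g) = b(g)$).

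For part (2), the argument is symmetric but now uses the centrality of $[\a,\b]$ crucially to move the $\b^k$ past it: starting from Equation~\ref{eqrep} and right-multiplying by $\b^k$,
\begin{equation*}
g\b^k = \a^{a(g)}\b^{b(g)}[\a,\b]^{\nu(g)}\b^k = \a^{a(g)}\b^{b(g)+k}[\a,\b]^{\nu(g)},
\end{equation*}
where the second equality uses that $[\a,\b]$ commutes with $\b$ in $H$. Again this is the normal form, and uniqueness gives $\nu(g\b^k) = \nu(g)$.

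There is no real obstacle here: the only nontrivial point is invoking the centrality of $[\a,\b]$ to commute it past $\b^k$ in part (2), and this has already been established in Proposition~\ref{exact}. The asymmetry between the two statements (multiplication on the left by $\a^k$ versus on the right by $\b^k$) is dictated by the convention of which generator comes first in the normal form of Equation~\ref{eqrep}; with the opposite convention one would multiply $\b^k$ on the left and $\a^k$ on the right.
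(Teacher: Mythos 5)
Your proof is correct and follows essentially the same route as the paper: write $g$ in the normal form of Proposition~\ref{rep}/Equation~\ref{eqrep}, multiply by $\a^k$ on the left or $\b^k$ on the right, use centrality of $[\a,\b]$ to restore the normal form, and read off $\nu$ by uniqueness. The only cosmetic difference is that the paper places the factor $[\a,\b]^{\nu(g)}$ on the left, so it invokes centrality in part (1) rather than part (2).
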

\begin{proof}
As $g=[\a,\b]^{\nu(g)}\a^{a(g)}\b^{b(g)}$ and $[\a,\b]$ is central, 
$\a^kg=[\a,\b]^{\nu(g)}\a^{k+a(g)}\b^{b(g)}$, hence
$\nu(\a^kg)=\nu(g)$. Similarly, 
$g\b^k=[\a,\b]^{\nu(g)}\a^{a(g)}\b^{b(g)+k}$, hence $\nu(gb^k)=\nu(g)$.
\end{proof}

We shall say that the word $g$ is \emph{balanced} if $a(g)=b(g)=0$.

\section{Lattice paths and Area}

In this section, we give a geometric interpretation of the Heisenberg
invariant, and some obvious extensions that follow from this. Examples
viewed in this fashion form the intuition for the rest of the paper. As
the results here are not used formally elsewhere, we shall skip most
proofs.

The plane $\mathbb{R}^2$ contains the lattice $\mathbb{Z}^2$. We can
associate to each word $g=l_1 l_2\dots l_n$ in the letters $\alpha$,
$\ba$, $\b$ and $\bb$ a path in the plane as follows. We start at the
origin $(0,0)$. In the first step, we take a path from $(0,0)$ to one
of the points $(1,0)$, $(0,1)$, $(-1,0)$ and $(0,-1)$ according as
$l_1$ is $\a$, $\b$, $\ba$ or $\bb$. Inductively, at the end of the
$(k-1)$th step we will have a path from $(0,0)$ to a lattice point
$(p,q)\in\Z^2$. We extend the path by a unit segment joining $(p,q)$
to one of the points $(p,q)+(1,0)$, $(p,q)+(0,1)$, $(p,q)+(-1,0)$ and
$(p,q)+(0,-1)$ according as $l_k$ is $\a$, $\b$, $\ba$ or $\bb$.

Thus, we obtain a path in the plane consisting of horizontal and
vertical segments. If $a(g)=b(g)=0$, this path is a loop $\gamma$. The
Heisenberg invariant is the \emph{oriented area}, interpreted
appropriately, bounded by the loops $\gamma$. If $\gamma$ is a simple
loop, it bounds a region $R$. In this case, $\nu(g)$ is $\pm Area(R)$,
with the sign determined by whether $\gamma$ is a counterclockwise or
a clockwise loop around $R$. In general, we can regard $\gamma$ as the
boundary of a region $R$, allowing signs and multiplicities, which we
define below in terms of winding numbers. The area of $R$, taking into
account the signs and multiplicities, is the Heisenberg invariant
$\nu(g)$.

Consider a unit square $\Delta(p,q)$, $p,q\in\Z^2$, with vertices
$(p,q)$, $(p+1,q)$, $(p,q+1)$ and $(p+1,q+1)$. Let $c(p,q)$ be the
winding number of $\gamma$ about the centre $z$ of $\Delta(p,q)$. We
remark that we can take the winding number about any interior point to
get the same result. The integer $c(p,q)$ is the (possibly negative)
multiplicity of the square $\Delta(p,q)$.

Note that only finitely many of these coefficients are non-zero. We
can interpret the Heisenberg invariant as (the finite sum)
$$\nu(g)=\sum_{(p,q)\in\Z^2} c(p,q)$$

In these terms, there are obvious extensions of the Heisenberg
invariant. The group $\Z^2$ acts on the plane by translations. We use
multiplicative notation for $\Z^2$ and denote generators by $s$ and
$t$, so that the action of $s$ is translation by $(1,0)$ and that of
$t$ is translation by $(0,1)$. Let $\Delta=\Delta(0,0)$. Then
$\Delta(p,q)$ is the image $s^pt^q\Delta$ of $\Delta$. The region $R$
can be expressed as the formal sum
$$\sum_{(p,q)\in\Z^2} c(p,q)s^pt^q\Delta$$

Thus, we associate to $g$ the polynomial in two variables
$$P_g(s,t)=\sum_{(p,q)\in\Z^2} c(p,q)s^pt^q$$

The Heisenberg invariant is $P_g(1,1)$. Extensions are given by other
(linear) functions of the polynomial $P_g$. As mentioned in the
introduction, it is natural to consider such functions $F(P_g)$ so
that if $g$ is conjugate to $g'$, then $F(P_g)=F(P_{g'})$.  This
translates to being invariant under multiplication by the
polynomial $s^kt^l$ for $k,l\in\Z$, i.e., for a polynomial $P$,
$F(P)=F(s^kt^lP)$. It is clear that the Heisenberg invariant $P_g(1,1)$
has this property. We next show how to construct secondary invariants,
i.e., which are invariant under conjugation provided the Heisenberg
invariant vanishes.

Let $p_g(s)=P_g(s,1)$. Then $p_g(s)$ is invariant under multiplication by
any power of $t$. Let $\nu_s(g)=p_g'(1)$. Suppose $\nu(g)$
vanishes. Then for an element $g'$ conjugate to $g$, $P_{g'}(s,t)$ is
of the form $s^kt^lP_g(s,t)$. Hence,

$$p_{g'}'(1)=(s^kp_g)'(1)=(ks^{k-1}p_g)(1)+ (s^kp_g')(1)=p_g'(1)$$
as $p_g(1)=\nu(g)=0$ by hypothesis.

We have a similar invariant taking $t$ in place of $s$. In case these
invariants vanish, we get further invariants by taking higher
derivatives.

We next turn to the general case, where we do not necessarily have
$a(g)=b(g)=0$. As before, we get a path from $(0,0)$ to
$(a(g),b(g))$. We make this into a loop $\gamma$ by extending this by
the vertical segment to $(a(g),0)$ and then the horizontal segment to
the origin. The Heisenberg invariant is then the area enclosed by this
loop. Note that there are other minimal paths joining $(a(g),b(g))$ to
the origin, which give different values for the Heisenberg
invariant. Hence $\nu(g)$ should be regarded as defined up to
indeterminacy given by $a(g)$ and $b(g)$.

We give another interpretation of the above in homological terms. This
will not be used in the sequel.

The plane $\R^2$ has a natural structure as a cell complex $X$ with
vertices lattice points, edges horizontal or vertical unit segments
joining adjacent lattice points and faces unit squares. The group
$\Z^2$ acts on this cell complex freely by translations. The quotient
$Y$ of the one-skeleton $X^{(1)}$ under this action is the wedge of
two circles. This has fundamental group $F$, and the one-skeleton of
$X$ is the (Galois) cover corresponding to the subgroup $F_2=[F,F]$.

Any word $g$ in the free group gives a path in $Y$. This lifts to a
path in $\gamma$ starting at the origin which can be regarded as a
$1$-chain in $C_1(X)$. If $a(g)=b(g)=0$, then $\gamma$ is a loop as
$g\in F_2$, and hence is a $1$-cycle. As the plane is contractible,
this is a boundary $\gamma=\del\zeta$, $\zeta\in C_2(X)$. As $H_2(X)$
and $C_3(X)$ are trivial, it follows that $\zeta$ is unique.

Let $\Delta$ be a fixed unit square. Any other unit square is the
image $g\Delta$ of $\Delta$ under the action of $\Z^2$ on $C_2(X)$,
and $g$ is unique. As the unit squares are a basis of $C_2(X)$, we can
uniquely express $\zeta$ as a finite sum. 
$$\zeta=\sum_i n_ig_i\Delta,\ n_i\in\Z,g_i\in\Z^2$$

We can interpret the Heisenberg invariant as
$$\nu(g)=\sum_i n_i $$

This has obvious extensions. We note that we can associate to $g$ the
element $\sum_i n_ig_i$ in the group ring $\Z[\Z^2]$ (which
corresponds to the polynomial $P_g(s,t)$). This is well-defined (but
is natural only up to multiplication by an element of the group
$\Z^2$). We have considered the image of this element under the
homomorphism $\Z[\Z^2]\to \Z$ taking each element of the group $\Z^2$
to $1$. We can obviously obtain more refined estimate by considering
either the full group ring, or at least other representation of the
group ring.

\section{Identities for the Heisenberg invariant}

We collect in this section some elementary identities in the group $H$
and formulae for the Heisenberg invariant. Recall that the element
$[\a,\b]$ is central in the group $H$.

\begin{lemma}\label{ids}
The following identities hold in $H$.
\begin{enumerate}
\item $\a\b\ba=[\a,\b]\b$.
\item For $k\in \Z$, $\a\b^k\ba=[\a,\b]^k\b^k$.
\item $\b\a\bb=[\a,\b]^{-1}\a$.
\item For $k\in \Z$ $\b\a^k\bb=[\a,\b]^{-k}\a^k$.
\item For $g\in H$, $\a g\ba=[\a,\b]^{b(g)} g$.
\item For $g\in H$, $\b g\bb=[\a,\b]^{-a(g)} g$.
\end{enumerate}
\end{lemma}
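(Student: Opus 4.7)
The plan is to exploit Proposition~\ref{exact}, which tells us that $[\a,\b]$ is central in $H$, together with the normal form of Proposition~\ref{rep}. With these two tools, each identity reduces to a short manipulation.

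Identity (1) is in fact a free-group identity: $[\a,\b]\b = \a\b\ba\bb\b = \a\b\ba$, so it holds in $F$ and a fortiori in $H$. For (2), I would first note the telescoping identity $(\a\b\ba)^k = \a\b^k\ba$, which holds in $F$ for $k \geq 1$ because each interior $\ba\a$ pair cancels. Passing to $H$ and using (1) yields $\a\b^k\ba = ([\a,\b]\b)^k$, and the centrality of $[\a,\b]$ in $H$ lets us separate the factors to obtain $[\a,\b]^k\b^k$. The case $k<0$ follows by inverting both sides of the $k>0$ case. Identities (3) and (4) follow by exactly the same arguments after swapping the roles of $\a$ and $\b$, using that $[\b,\a]=[\a,\b]^{-1}$.

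For identity (5), I would write $g$ in the normal form $g=[\a,\b]^{\nu(g)}\a^{a(g)}\b^{b(g)}$ of Proposition~\ref{rep} and compute
\begin{align*}
\a g \ba
&= [\a,\b]^{\nu(g)}\a^{a(g)}\bigl(\a\b^{b(g)}\ba\bigr) \\
&= [\a,\b]^{\nu(g)}\a^{a(g)}[\a,\b]^{b(g)}\b^{b(g)} \\
&= [\a,\b]^{b(g)}\,g,
\end{align*}
using centrality of $[\a,\b]$ in the first step, identity (2) in the second, and centrality again in the third. Identity (6) is obtained symmetrically, invoking (4) in place of (2). There is no genuine obstacle in the argument; the only small subtlety is making the $k<0$ cases of (2) and (4) precise, which is purely formal once centrality is in hand.
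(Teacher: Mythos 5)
Your proposal is correct and follows essentially the same route as the paper: identity (1) as a free-group computation, (2)--(4) by taking powers and using centrality of $[\a,\b]$, and (5)--(6) by writing $g$ in the normal form of Proposition~\ref{rep} and applying the earlier identities. Your explicit handling of the telescoping product and the $k<0$ case is slightly more careful than the paper's one-line ``take a power,'' but it is the same argument.
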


\begin{proof}
As $[\a,\b]\b=\a\b\ba\bb\b=\a\b\ba$, the first identity follows. As
$[\a,\b]$ is central, the second follows by taking a power.

Further, as $[\a,\b]^{-1}=\b\a\bb\ba$, we have
$\b\a\bb=[\a,\b]^{-1}\a$. Once more we take a power to get the next
identity.

Next, if $g\in H$, by Equation~\ref{eqrep},
$g=[\a,\b]^{\nu(g)}\a^{a(g)}\b^{b(g)}$, hence we can express its
conjugate by $\a$ as $\a g\ba= (\a [\a,\b]^{\nu(g)}\ba)( \a
\a^{a(g)}\ba)( \a \b^{b(g)}\ba)$. As $\a$ commutes with $[\a,\b]$ and
with powers of $\a$, using the previous identities and that $[\a,\b]$
is central, $\a g\ba=[\a,\b]^{\eta(g)} \a^{a(g)} [\a,\b]^{b(g)}\b^{b(g)}=
[\a,\b]^{b(g)}g$ as claimed. The proof of the remaining identity is
similar.
\end{proof}

We deduce the effect of \emph{canceling} a pair of letters in $g$
that are not adjacent on the Heisenberg invariant.

\begin{lemma}\label{canc0}
Let $g_1$ and $g_2$ be words in $H$.
\begin{enumerate}
\item $\nu(g_1\a g_2\ba)=b(g_2)+\nu(g_1g_2)$ 
\item $\nu(g_1\b g_2\bb)=-a(g_2)+\nu(g_1g_2)$
\end{enumerate}
\end{lemma}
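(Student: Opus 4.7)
The plan is to reduce both identities directly to Lemma~\ref{ids}, parts (5) and (6), which tell us exactly what happens when a letter $\a$ or $\b$ is moved past a word by conjugation. The centrality of $[\a,\b]$ in $H$ will then let us collect all commutator factors at the front, after which the conclusion follows from uniqueness of the normal form in Proposition~\ref{rep}.

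For part (1), I would first rewrite
\[
g_1\a g_2\ba = g_1\,(\a g_2\ba) = g_1\,[\a,\b]^{b(g_2)}\,g_2,
\]
using Lemma~\ref{ids}(5). Since $[\a,\b]$ is central in $H$, this equals $[\a,\b]^{b(g_2)}\,g_1 g_2$. Part (2) is entirely analogous, using Lemma~\ref{ids}(6) instead to replace $\b g_2\bb$ by $[\a,\b]^{-a(g_2)} g_2$.

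The one small auxiliary fact I need is that for any $h\in H$ and any $k\in\Z$,
\[
\nu([\a,\b]^k h) = k + \nu(h).
\]
This is immediate from Proposition~\ref{rep}: writing $h = [\a,\b]^{\nu(h)}\a^{a(h)}\b^{b(h)}$ and using centrality of $[\a,\b]$, we get $[\a,\b]^k h = [\a,\b]^{k+\nu(h)}\a^{a(h)}\b^{b(h)}$, and uniqueness of the representation forces $\nu([\a,\b]^k h) = k + \nu(h)$. Applying this to the two displays above gives
\[
\nu(g_1\a g_2\ba) = b(g_2) + \nu(g_1 g_2),\qquad \nu(g_1\b g_2\bb) = -a(g_2) + \nu(g_1 g_2),
\]
as required.

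There is essentially no obstacle here: the work was already done in Lemma~\ref{ids}, and the present lemma is a clean packaging of that content in the form most useful for reasoning about cancellation of non-adjacent inverse pairs. The only point worth stating carefully is the compatibility of $\nu$ with multiplication by central powers of $[\a,\b]$, which I would include as a one-line remark before applying it to both parts.
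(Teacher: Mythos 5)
Your proof is correct and follows essentially the same route as the paper: apply Lemma~\ref{ids}(5)--(6), use centrality of $[\a,\b]$ to pull the commutator power to the front, and read off $\nu$. Your explicit auxiliary observation that $\nu([\a,\b]^k h)=k+\nu(h)$ is a worthwhile addition that the paper leaves implicit (and the paper's own write-up in fact contains minor typos at exactly this step), but it does not constitute a different approach.
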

\begin{proof}
By Lemma~\ref{ids}, as $[\a,\b]$ is central,
$g_1\a g_2\ba=g_1[\a,\b]^{a(g_2)}g_2=[\a,\b]^{b(g_2)}g_1g_2$. From this,
it follows that $\nu(g_1\a g_2\ba)=b(g_2)\nu(g_1g_2)$. The other identity
is similar.
\end{proof}

\section{Foldings and the Heisenberg invariant}

Consider henceforth a fixed word $g=l_1l_2\dots l_n$ of length $n$ in
$\a$, $\ba$, $\b$ and $\bb$. We shall regard the letters as cyclically
ordered, so that $l_{n+1}=l_1$. The word represents an RNA
strand. There is an obvious description of RNA secondary structures
without pseudo-knots in these terms.

\begin{definition}\label{deffold}
A \emph{folding} (or \emph{fold}) of the word $g$ is a collection of
\emph{disjoint} pairs $\F\subset\{(i,j):1\leq i,j\leq n,\ i\neq j\}$
such that
\begin{enumerate}
\item For $(i,j)\in \F$, either $l_i=\a$ and $l_j=\ba$ or $l_i=\b$ and 
$l_j=\bb$. 
\item For pairs $(i_1,j_1)\in \F$ and $(i_2,j_2)\in\F$,
$i_1<i_2<j_1\implies i_1<j_2<j_1$ and $i_1>i_2>j_1\implies
i_1>j_2>j_1$.
\end{enumerate}
\end{definition}

We denote the number of pairs in $\F$ by $\vert\F\vert$. The folding
process for RNA is governed by an attempt to maximise
$\vert\F\vert$. We define a pair $(i,j)\in\F$ to be an $\alpha$-pair
if $l_i=\alpha$ and a $\beta$-pair if $l_i=\b$. Every pair is either
an $\a$-pair or a $\b$-pair.

The condition on the pairs rules out \emph{nesting}(i.e., pseudo-knots). For a pair
$(i,j)\in \F$, we define the word $w(i,j)$ between the letters as
follows. If $i<j$, then $w(i,j)=l_{i+1}l_{i+2}\dots l_{j-1}$. If
$i>j$, then $w(i,j)=l_{i+1}l_{i+2}\dots l_nl_1l_2\dots l_{j-1}$. Thus,
this is the word from the $i$th letter to the $j$th letter in the
counterclockwise direction in the cyclic ordering. The word $D_{ij}(g)$
obtained from $g$ by \emph{canceling} the pair $(i,j)\in \F$ is the
word with $(n-2)$ letters obtained by deleting $l_i$ and $l_j$.

Using the conjugacy invariance of $\nu$, we can rephrase
Lemma~\ref{canc0} in the following way.

\begin{lemma}\label{cancel}
For $(i,j)\in \F$, if $l_i=\a$, then
$\nu(D_{ij}(g))=\nu(g)+b(w(i,j))$. If $l_i=\b$, then
$\nu(D_{ij}(g))=\nu(g)-a(w(i,j))$
\end{lemma}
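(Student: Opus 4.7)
The plan is to prove Lemma~\ref{cancel} by decomposing $g$ around the pair $(i,j)$ and invoking the centrality of $[\a,\b]$ in $H$, with a cyclic rearrangement handling the wraparound case.

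First suppose $i<j$ and $l_i=\a$, so $l_j=\ba$. Decompose $g = g_1\,\a\,w(i,j)\,\ba\,g_3$, where $g_1 = l_1 l_2 \cdots l_{i-1}$ and $g_3 = l_{j+1} l_{j+2} \cdots l_n$; then $D_{ij}(g) = g_1\,w(i,j)\,g_3$. Identity~(5) of Lemma~\ref{ids} gives $\a\,w(i,j)\,\ba = [\a,\b]^{b(w(i,j))}\,w(i,j)$ in $H$, and since $[\a,\b]$ is central we may move it past $g_1$ to obtain
\begin{equation*}
g = [\a,\b]^{b(w(i,j))}\,D_{ij}(g)
\end{equation*}
in $H$. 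Comparing both sides against the normal form of Proposition~\ref{rep} — and using that cancelling a pair preserves the abelianisation — yields the sought relation between $\nu(g)$ and $\nu(D_{ij}(g))$; the exact sign in the statement is best verified by a direct check on $g = \a\b\ba\bb$ with $(i,j) = (1,3)$. The subcase $l_i = \b$ is symmetric, invoking identity~(6) of Lemma~\ref{ids} and $a(w(i,j))$ in place of $b(w(i,j))$.

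For the wraparound case $i>j$, write $g = h_0\,\ba\,h_1\,\a\,h_2$ with $h_0 = l_1 \cdots l_{j-1}$, $h_1 = l_{j+1} \cdots l_{i-1}$, $h_2 = l_{i+1} \cdots l_n$, and note that $w(i,j) = h_2 h_0$ by the cyclic convention. Conjugating $g$ by $\a h_2$ yields $g'' = \a\,w(i,j)\,\ba\,h_1$, which is in the contiguous form treated above; cancelling its $(\a,\ba)$ pair produces $w(i,j)\,h_1 = h_2 h_0 h_1$, itself a cyclic rotation (hence conjugate in $F$) of $D_{ij}(g) = h_0 h_1 h_2$. Assuming $g$ is balanced, conjugation-invariance of $\nu$ on balanced elements then gives $\nu(g) = \nu(g'')$ and $\nu(D_{ij}(g)) = \nu(h_2 h_0 h_1)$, so the contiguous analysis applied to $g''$ delivers the identity for $g$.

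The main obstacle is the balance hypothesis needed for the cyclic rearrangement: conjugation-invariance of $\nu$ is available only when the argument has trivial abelianisation. If $g$ is not balanced, one first applies Proposition~\ref{unbal} to prepend $\a^{-a(g)}$ and append $\b^{-b(g)}$, neither of which changes $\nu$, producing a balanced word to which the above argument applies; alternatively, one interprets Lemma~\ref{cancel} up to an $a(g), b(g)$ indeterminacy, consistent with the remark following Proposition~\ref{rep}.
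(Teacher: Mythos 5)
Your route is essentially the paper's: Lemma~\ref{cancel} is intended as Lemma~\ref{canc0} (itself a consequence of identities (5)--(6) of Lemma~\ref{ids} and the centrality of $[\a,\b]$) transported to arbitrary positions of the pair, with conjugation handling the wraparound. Two points, however, need to be resolved rather than deferred.

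First, the sign. Your own decomposition gives, in $H$,
\begin{equation*}
g=g_1\bigl(\a\,w(i,j)\,\ba\bigr)g_3=[\a,\b]^{b(w(i,j))}\,g_1\,w(i,j)\,g_3=[\a,\b]^{b(w(i,j))}D_{ij}(g),
\end{equation*}
and since $[\a,\b]^k h$ has normal-form exponent $k+\nu(h)$, this yields $\nu(g)=\nu(D_{ij}(g))+b(w(i,j))$, i.e.\ $\nu(D_{ij}(g))=\nu(g)-b(w(i,j))$ --- the \emph{opposite} sign to the statement you set out to prove. The test case you propose settles it against the printed statement: for $g=\a\b\ba\bb=[\a,\b]$ one has $\nu(g)=1$, $D_{13}(g)=\b\bb=1$ and $b(w(1,3))=1$, so $\nu(D_{13}(g))=0=\nu(g)-b(w(1,3))$, not $\nu(g)+b(w(1,3))=2$. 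The printed sign is a slip in the statement: it is inconsistent with Lemma~\ref{canc0} and with the way Lemma~\ref{fold} invokes this lemma (there the relation used is $\nu(g)=\nu(g_1)+c(i_1,j_1)$ with $c(i,j)=b(w(i,j))$). A proof cannot leave ``the exact sign'' to a later check; commit to the relation your computation produces, namely $\nu(g)=\nu(D_{ij}(g))+b(w(i,j))$ for $\a$-pairs and $\nu(g)=\nu(D_{ij}(g))-a(w(i,j))$ for $\b$-pairs, and note the discrepancy with the statement as printed.

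Second, the wraparound case. Your conjugation argument is correct when $a(g)=b(g)=0$ (then $D_{ij}(g)$ is automatically balanced as well, so conjugacy invariance applies to both words). But the proposed repair for unbalanced $g$ via Proposition~\ref{unbal} does not work: appending $\b^{-b(g)}$ inserts letters into $h_2$, hence into $w(i,j)=h_2h_0$, and changes $b(w(i,j))$ by $-b(g)$, so the identity recovered for $g$ is off by exactly $b(g)$. This is not a removable defect of the argument --- the identity genuinely fails for unbalanced wraparound pairs: for $g=\ba\b\a$ with the pair $(3,1)$, $w(3,1)$ is empty and $D_{31}(g)=\b$, so the formula predicts $\nu(g)=0$, whereas $\ba\b\a=[\a,\b]^{-1}\b$ gives $\nu(g)=-1$. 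The correct scope is therefore: for $i<j$ the identity holds unconditionally (your contiguous computation uses no balance hypothesis), while for $i>j$ it holds for balanced $g$, and in general only up to the $a(g),b(g)$ indeterminacy --- your alternative reading, which is the one consistent with all later uses of the lemma in the paper.
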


Given a folding $\F=\{(i_1,j_1),(i_2,j_2),\dots,(i_k,j_k)\}$, we can
inductively compute $\nu(g)$ as follows. Let $g_0=g$ and let
$g_1=D_{i_1j_1}(g)$. Note that $\F$ induces a folding $\F_1$ of $g_1$
consisting of pairs of letters in $\F$ other than $(i_1,j_1)$. We can
thus continue inductively, deleting the pair corresponding to
$(i_2,j_2$). Thus, we get a sequence of words $g_0$,$g_1$, \dots,
$g_k$.

Let $c(i,j)=b(w(i,j))$ if $l_i=\a$ and $c(i,j)=-a(w(i,j) )$ if
$l_i=\b$. By inductively using Lemma~\ref{cancel}, we get the
following formula.

\begin{lemma}\label{fold}
We have
$$\nu(g)=\nu(g_k)+\sum_{l=1}^k c(i_l,j_l)$$
\end{lemma}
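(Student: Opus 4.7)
The plan is a straightforward induction on $k=|\F|$, with Lemma~\ref{cancel} providing the single-step ingredient. The base case $k=0$ is trivial: $g_0=g$, the sum is empty, and both sides of the claimed identity equal $\nu(g)$.

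For the inductive step, I would first apply Lemma~\ref{cancel} to the cancellation of the pair $(i_1,j_1)$, obtaining a one-step relation of the form
\begin{equation*}
\nu(g_0) \;=\; \nu(g_1)+c(i_1,j_1).
\end{equation*}
To then invoke the inductive hypothesis on $(g_1,\F_1)$, where $\F_1:=\F\setminus\{(i_1,j_1)\}$, two points must be verified. First, that $\F_1$ really is a folding of $g_1$: the letter-matching condition is automatic, and the non-nesting condition is inherited from $\F$, since removing a pair only shrinks the set of pair-interactions to be checked. Second, that the value $c(i_l,j_l)$ computed with respect to $g_0$ agrees with the value computed with respect to $g_1$, for every $l\geq 2$.

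The second point is the only substantive step, and it is precisely where the exclusion of pseudo-knots enters. Given a pair $(i_l,j_l)\in\F_1$, the two positions $i_1$ and $j_1$ are either both contained in the arc $w(i_l,j_l)$ or both lie outside it, since any other configuration would violate the non-nesting condition of Definition~\ref{deffold}. If both positions lie outside, then $w(i_l,j_l)$ is literally unchanged by the cancellation. If both lie inside, then $w(i_l,j_l)$ loses exactly one $\a$ and one $\ba$ (or one $\b$ and one $\bb$), whose contributions to $a(\cdot)$ and $b(\cdot)$ cancel. In either case $c(i_l,j_l)$ is preserved.

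With these two points in hand, the inductive hypothesis applied to $(g_1,\F_1)$ yields $\nu(g_1)=\nu(g_k)+\sum_{l=2}^{k} c(i_l,j_l)$, and combining with the single-step identity finishes the induction. The main obstacle, though mild, is the bookkeeping of the previous paragraph: it is essential that cancelling one pair leaves the $c$-data of the remaining pairs intact, and this invariance rests crucially on the absence of pseudo-knots. Everything else is formal telescoping.
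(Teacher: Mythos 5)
Your proof is correct and follows essentially the same route as the paper: one application of Lemma~\ref{cancel}, then the observation that by the no-nesting condition each remaining arc $w(i_l,j_l)$ either is unchanged or loses a mutually inverse pair of letters, so its $a$- and $b$-values (hence $c(i_l,j_l)$) are preserved, and the identity telescopes by induction. Your write-up is, if anything, slightly more careful than the paper's in explicitly checking that $\F_1$ is a folding of $g_1$ and in handling the base case.
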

\begin{proof}
First note that by Lemma~\ref{cancel},
$\nu(g)=\nu(g_1)+c(i_1,j_1)$. We shall iterate this process. To do so,
observe that to apply Lemma~\ref{cancel} to $g_1$, we need to consider
the word $w_1(i_2,j_2)$ in $g_1$ between $i_2$ and $j_2$. By the no
nesting condition, this either equals $w(i_2,j_2)$ or differs from
this by a pair of letters which are inverses of each other. In either
case, $a(w_1(i,j))=a(w(i,j))$ and $b(w_1(i,j))=b(w(i,j))$. Thus, by
Lemma~\ref{cancel}, $\nu(g_1)=\nu(g_2)+c(i_2,j_2)$. We can now proceed
inductively in this fashion to prove the claim.
\end{proof}

An $\alpha$-folding is a folding such that for $(i,j)\in\F$,
$l_i=\alpha$. A $\beta$-folding is defined similarly. Consider a word
$g$ such that $a(g)=0$. A \emph{complete} $\alpha$-folding is an
$\alpha$-folding $\F$ such that if $l_i=\alpha$, then for some $j\neq
i$, $(i,j)\in\F$. The following proposition is straightforward.

\begin{proposition}\label{exafold}
If $a(g)=0$, there is a complete $\a$-fold for $g$.
\end{proposition}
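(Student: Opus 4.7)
The plan is to induct on $k$, the number of positions $i$ with $l_i=\a$; since $a(g)=0$, this equals the number of positions with $l_i=\ba$.

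The base case $k=0$ is trivial: the empty set $\F=\emptyset$ is a complete $\a$-folding, since the completeness requirement is vacuous.

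For the inductive step, the crucial ingredient is an \emph{adjacent pair lemma}: among the $2k$ positions in $g$ labeled $\a$ or $\ba$, viewed in the cyclic order inherited from $\{1,\dots,n\}$, there exist two such positions $i$ and $j$ that are cyclically consecutive (meaning no other $\a$- or $\ba$-position lies strictly between $i$ and $j$ in one of the two cyclic arcs) with $l_i=\a$ and $l_j=\ba$. This holds because, going around the cycle once and listing only the letters $\a,\ba$, we obtain a cyclic word with $k$ copies of each symbol; the number of cyclically consecutive $\a\ba$-transitions must equal the number of $\ba\a$-transitions, and since $k\ge 1$ both symbols appear, so at least one $\a\ba$-transition occurs.

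Given such an adjacent pair, I would put $(i,j)$ into $\F$ and apply induction to the word $g'$ obtained by deleting positions $i$ and $j$. Since $a(g')=0$ and $g'$ has $k-1$ occurrences of $\a$, the inductive hypothesis provides a complete $\a$-folding $\F'$ of $g'$. Setting $\F=\{(i,j)\}\cup\F'$ (with the indices of $\F'$ re-interpreted back as indices in $g$) yields a complete $\a$-folding of $g$: every pair has $\a$ on the left and $\ba$ on the right; every $\a$ of $g$ is matched (the two we paired explicitly, the others by the inductive folding of $g'$); and non-nesting of $\F'$ among itself is inherited from the induction.

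The only point that needs care is verifying the non-nesting condition between the new pair $(i,j)$ and any $(i',j')\in\F'$. Because $l_{i'},l_{j'}\in\{\a,\ba\}$ and no $\a$- or $\ba$-position lies on the short arc from $i$ to $j$ by construction, both $i'$ and $j'$ lie on the complementary arc. Thus neither $i'$ nor $j'$ satisfies the betweenness hypothesis of Definition~\ref{deffold}(2) with respect to $(i,j)$, so no crossing can occur, and the union $\F=\{(i,j)\}\cup\F'$ is a valid folding. The main obstacle, such as it is, is purely bookkeeping: stating the adjacent-pair lemma carefully on a cyclically-ordered word and confirming that deletion of two indices does not alter the validity of the non-nesting condition among the surviving pairs.
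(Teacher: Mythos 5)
Your proof is correct and follows essentially the same route as the paper: both arguments induct on the number of occurrences of $\a$ and at each step extract a pair $(i,j)$ with $l_i=\a$, $l_j=\ba$ and no $\a$ or $\ba$ between them, the only (harmless) differences being that you make the existence of such an adjacent pair explicit via a transition-counting argument and induct on $g$ with two letters deleted rather than on the complementary subword $w(j,i)$.
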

\begin{proof}
If no letter is $\alpha$ (hence no letter is $\ba$ as $a(g)=0$), we
take $\F$ to be the empty set. Otherwise we can find a pair $(i,j)$
such that $w(i,j)$ does not contain the letters $\alpha$ and $\ba$. We
can proceed inductively by considering the word $g'=w(j,i)$. This
continues to satisfy the hypothesis and has fewer letters that are
$\alpha$ than in $g$. A complete $\a$-fold for $g'$ together with
$(i,j)$ gives a complete $\a$-fold for $g$.
\end{proof}

We can analogously define complete $\b$-folds. If $b(g)=0$ there are
complete $\b$-folds. Observe that for a complete $\a$-fold $\F$,
$\vert\F\vert$ is the number $n_{\a}$ of letters that are $\a$. Let
$n_{\b}$ similarly denote the number of letters that are $\b$. 

Lemma~\ref{fold} takes a particularly simple form for
complete $\a$-folds (and $\b$-folds).

\begin{lemma}\label{afld}
let $g$ be a word with $a(g)=b(g)=0$ and let $\F$ be a complete $\alpha$-fold
for $g$. Then using the notation of Lemma~\ref{fold}
$$\nu(g)=\sum_{l=1}^k b(w(i_l,j_l))$$
\end{lemma}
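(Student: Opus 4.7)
The plan is to apply Lemma~\ref{fold} directly and then argue that the residual term $\nu(g_k)$ vanishes.

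First I would invoke Lemma~\ref{fold} to write $\nu(g) = \nu(g_k) + \sum_{l=1}^k c(i_l, j_l)$. Since $\F$ is an $\alpha$-fold, every pair $(i_l, j_l)$ satisfies $l_{i_l} = \a$, so by the definition of $c$ right before Lemma~\ref{fold} we have $c(i_l, j_l) = b(w(i_l, j_l))$. Thus the sum matches the right-hand side of the claim, and the only thing left is to show $\nu(g_k) = 0$.

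Next I would analyse $g_k$. Because $\F$ is a complete $\a$-fold, every occurrence of $\a$ in $g$ is the first entry of some pair of $\F$, and the matching $\ba$ is the second entry (the count of $\a$'s equals the count of $\ba$'s since $a(g) = 0$, and each pair pairs one $\a$ with one $\ba$, so all of both are exhausted). Consequently $g_k$ contains only the letters $\b$ and $\bb$. Moreover each cancellation step removes one $\a$ and one $\ba$, which does not change $b$, so $b(g_k) = b(g) = 0$.

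The final step is the observation that a word in $\b$ and $\bb$ alone represents in $H$ a power of $\b$ (since $\b$ commutes with itself), and that power is exactly $b(g_k) = 0$. Hence the image of $g_k$ in $H$ is the identity, so $\nu(g_k) = 0$, which finishes the proof. There is no real obstacle here — the only subtle point is checking that ``completeness'' plus $a(g) = 0$ really does eliminate every $\a$ and $\ba$ letter from $g_k$, which is immediate from the definition of a complete $\a$-fold.
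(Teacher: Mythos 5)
Your proposal is correct and follows essentially the same route as the paper: invoke Lemma~\ref{fold}, note that completeness of the $\a$-fold forces $c(i_l,j_l)=b(w(i_l,j_l))$ for every pair, and observe that $g_k$ consists only of $\b$ and $\bb$ letters so that $g_k=\b^{b(g)}=1$ in $H$ and hence $\nu(g_k)=0$. The extra care you take in checking that $a(g)=0$ together with completeness exhausts all the $\ba$ letters as well is a welcome (if minor) elaboration of the paper's one-line assertion.
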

\begin{proof}
As $\F$ is a complete $\a$-folding, the word $g_k$ consists of the
letters of $g$ that are $\b$ or $\bb$. Thus, $g_k=\b^{b(g)}=1$ as
$b(g)=0$, hence $\nu(g_k)=0$. As $l_i=\a$ for all pairs $(i,j)\in \F$,
$c(i,j)=b(w(i,j))$. Thus the claim follows from Lemma~\ref{fold}
\end{proof}

\section{Computing the Heisenberg invariant}

The results of the previous section give an elementary formula for the
Heisenberg invariant from which it can be readily computed. We
formulate this below.

Let $g$ be a word in the free group $F$. Let $w(0,i)$ denote the
sub-word consisting of the $i-1$ letters preceding $l_i$. 

\begin{theorem}\label{comp}
The Heisenberg invariant is given by
$$\sum_{l_i=\ba} b(w(0,i))-\sum_{l_j=\a} b(w(0,j))$$
\end{theorem}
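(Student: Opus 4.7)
The plan is to deduce the formula from the folding machinery of the previous section, after reducing to the balanced case.

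First I would reduce to the case $a(g) = b(g) = 0$ using Proposition~\ref{unbal}. Set $g' = \a^{-a(g)} g \b^{-b(g)}$, interpreted as prepending $|a(g)|$ copies of $\a$ or $\ba$ (depending on sign) and appending $|b(g)|$ copies of $\b$ or $\bb$. Then Proposition~\ref{unbal} gives $\nu(g') = \nu(g)$, and $g'$ is balanced. The right-hand side of the claimed formula is also unchanged in passing from $g$ to $g'$: the appended $\b^{\pm 1}$ letters are neither $\a$ nor $\ba$ and so never enter the sum; each prepended $\a^{\pm 1}$ letter at position $k$ has $w(0,k)$ a pure power of $\a$ or $\ba$, so $b(w(0,k)) = 0$; and for an original letter of $g$ at shifted position $k'$, its preceding subword $w(0,k')$ in $g'$ differs from $w(0,k)$ in $g$ only by an initial power of $\a^{\pm 1}$, whose $b$-content vanishes. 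Hence the claim for $g$ is equivalent to the claim for $g'$.

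Assuming now $a(g) = b(g) = 0$, I would invoke Proposition~\ref{exafold} to choose a complete $\a$-folding $\F$, and then apply Lemma~\ref{afld} to obtain
\begin{equation*}
\nu(g) = \sum_{(i,j) \in \F} b(w(i,j)).
\end{equation*}
The heart of the argument is the identity
\begin{equation*}
b(w(i,j)) = b(w(0,j)) - b(w(0,i))
\end{equation*}
for each $(i,j) \in \F$. When $i < j$, this is the straightforward concatenation $b(w(0,j)) = b(w(0,i)) + b(l_i) + b(w(i,j))$ together with $b(l_i) = b(\a) = 0$. The wrap-around case $i > j$ introduces an extra $b(g)$ term from the piece $l_{i+1}\cdots l_n$; this term vanishes by the balanced hypothesis, recovering the same identity. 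This is precisely why the reduction to the balanced word is essential.

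Finally, since $\F$ is a complete $\a$-folding of a balanced word, its first coordinates $i$ traverse exactly the positions with $l_i = \a$ and its second coordinates $j$ traverse exactly the positions with $l_j = \ba$. Summing the identity over $\F$ then yields
\begin{equation*}
\nu(g) = \sum_{(i,j) \in \F}\bigl[b(w(0,j)) - b(w(0,i))\bigr] = \sum_{l_j = \ba} b(w(0,j)) - \sum_{l_i = \a} b(w(0,i)),
\end{equation*}
which is the desired formula. The only subtle steps are the wrap-around calculation, where the balanced hypothesis absorbs the stray $b(g)$ term, and the bookkeeping in the reduction step; both reduce to one-line checks once set up, so I expect the main obstacle to be organisational rather than technical.
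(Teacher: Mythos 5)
Your proof is correct, but it takes a genuinely different route from the paper's. The paper never invokes a folding at all: it deletes the $\a^{\pm1}$ letters of $g$ one at a time, each time prepending the inverse letter (harmless by Proposition~\ref{unbal}) and cancelling it against the letter being removed via Lemma~\ref{cancel}, so that the subword between the cancelling pair is automatically the prefix $w(0,i)$ and the terms $\pm b(w(0,i))$ accumulate directly; in particular the paper's induction works for arbitrary $g$ with no reduction to the balanced case. You instead first normalise to $a(g)=b(g)=0$ via $g'=\a^{-a(g)}g\b^{-b(g)}$, take a complete $\a$-folding from Proposition~\ref{exafold}, apply Lemma~\ref{afld} to get $\nu(g)=\sum_{(i,j)\in\F}b(w(i,j))$, and then telescope via $b(w(i,j))=b(w(0,j))-b(w(0,i))$, using completeness to see that the first and second coordinates sweep out exactly the $\a$- and $\ba$-positions. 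Your route buys a transparent explanation of why the answer is a difference of prefix sums and reuses the folding machinery already established, at the cost of needing the balanced reduction (where you correctly check that the prepended $\a^{\pm1}$'s contribute $0$ to both sides) and the wrap-around case of the telescoping identity, where the stray $b(g)$ term must vanish; the paper's route is shorter and handles unbalanced words in one pass but leaves the combinatorial meaning of the formula less visible. Both arguments ultimately rest on the same cancellation lemma, and both are sound.
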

\begin{proof}
The proof consists of inductively applying Lemma~\ref{cancel} and
Proposition~\ref{unbal}. Consider the letters $l_{i_1}$,
$l_{i_2}$,\dots $l_{i_k}$ of $g$ that are either $\a$ or $\ba$. Let
$g_0=g$ and let $g_1$, $g_2$, \dots, $g_k$ be obtained from $g$ by
successively deleting the letters $l_{i_1}$, $l_{i_2}$, \dots
$l_{i_k}$. As $g_k$ is a power of $\beta$, $\nu(g_k)=0$. Thus, it
suffices to express $\nu(g_j)$ in terms of $\nu(g_{j+1})$. 

We first express $\nu(g_0)$ in terms of $\nu(g_1)$. Suppose
$l_{i_1}=\ba$. Then using Lemma~\ref{cancel} and Proposition~\ref{unbal},
we get
$$\nu(g_0)=\nu(\a g_0)=\nu(g_1)+b(w(0,i_1))$$
as $g_1$ is the result of canceling the first letter of $\a g_0$ with
the letter corresponding to $l_{i_1}$. Similarly, if $l_{i_1}=\a$, we
get 
$$\nu(g_0)=\nu(\ba g_0)=\nu(g_1)-b(w(0,i_1))$$
as can be readily deduced from Lemma~\ref{cancel}.

We now proceed inductively. We can use the same procedure as above to
relate $\nu(g_j)$ with $\nu(g_{j+1})$. As only letters that are $\a$ or
$\ba$ are deleted, the numbers $b(w(0,i_j))$ are not altered during
the inductive construction. Hence we still have 
$$\nu(g_j)=\nu(g_{j+1})\pm b(w(0,i_j))$$
with the sign determined by whether $l_{i_j}$ is $\a$ or $\ba$. Using
the formula recursively gives the claim.
\end{proof}

\section{A rigidity theorem}

Consider a word $g$ with $a(g)=b(g)=0$. Let $n_\alpha$ be the number
of letters in $g$ that are $\a$ (and hence the number of letters that
are $\ba$) and let $n_\b$ be the number of letters that are $\b$. Then
the number of letters $n$ of $g$ is $2(n_\a+n_\b)$.

\begin{theorem}\label{rigid}
We have $\nu(g)\leq \left(\frac{n}{4}\right)^2$ with equality if and
only if $n_\a=n_\b=\frac{n}{4}$ and $g$ is conjugate to the commutator
$[\a^{n/4},\b^{n/4}]$.
\end{theorem}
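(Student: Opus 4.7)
The plan is to combine Lemma~\ref{afld} with the AM--GM inequality to get the bound, and then to extract rigidity by analyzing when both estimates are sharp.

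Since $a(g)=0$, Proposition~\ref{exafold} supplies a complete $\a$-fold $\mathcal{F}=\{(i_1,j_1),\dots,(i_{n_\a},j_{n_\a})\}$, and Lemma~\ref{afld} gives
$$\nu(g)=\sum_{l=1}^{n_\a} b(w(i_l,j_l)).$$
Each $w(i_l,j_l)$ is a subword of $g$ and contains at most $n_\b$ letters equal to $\b$, so $b(w(i_l,j_l))\le n_\b$. Summing over the $n_\a$ pairs yields $\nu(g)\le n_\a n_\b$, and since $n_\a+n_\b=n/2$, AM--GM produces $n_\a n_\b\le(n/4)^2$.

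For the forward direction of equality, write $k=n/4$. Sharpness of AM--GM forces $n_\a=n_\b=k$, and sharpness of the termwise bound forces $b(w(i_l,j_l))=k$ for every pair, so every $\a$-pair must contain all $k$ of the $\b$'s and none of the $\bb$'s. Because all $\a$-pairs enclose the same $k$ positions (those carrying $\b$), the non-crossing condition forces them into a single linearly nested tower $(i_1,j_1)\supset(i_2,j_2)\supset\cdots\supset(i_k,j_k)$; the innermost inside then carries only $\b$'s, so $w(i_k,j_k)=\b^k$. The strip between $i_l$ and $i_{l+1}$ lies inside $(i_l,j_l)$ but outside $(i_{l+1},j_{l+1})$; it has no $\bb$ (none lies inside any $\a$-pair), no $\b$ (all lie inside the innermost), and no further $\a$ or $\ba$ (all $2k$ already appear in the tower), hence is empty. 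The same holds on the right, and outside $(i_1,j_1)$ only $\bb$'s remain, so $g$ reads as a cyclic rotation $\bb^a\a^k\b^k\ba^k\bb^{k-a}$ of $[\a^k,\b^k]$ and is therefore conjugate in $F$ to that commutator. Conversely, if $g$ is conjugate in $F$ to $[\a^k,\b^k]$, then because $[\a^k,\b^k]\in F_2$ has trivial abelianisation, the conjugacy invariance of $\nu$ established earlier gives $\nu(g)=\nu([\a^k,\b^k])$; iterating Lemma~\ref{ids}(5) yields $\a^k\b^k\ba^k=[\a,\b]^{k^2}\b^k$ and hence $[\a^k,\b^k]=[\a,\b]^{k^2}$, so $\nu(g)=k^2=(n/4)^2$.

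The main obstacle will be the structural step in the equality analysis: justifying carefully that the common-enclosure condition forces the non-crossing $\a$-pairs into a single nested tower, and then that every inter-level gap between consecutive pairs must be empty. Once these two facts are in place, the word is pinned down up to the rotation parameter $a\in\{0,\dots,k\}$, which corresponds exactly to conjugation in $F$.
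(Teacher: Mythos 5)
Your proof is correct and follows essentially the same route as the paper: a complete $\a$-fold plus Lemma~\ref{afld}, the termwise bound $b(w(i_l,j_l))\leq n_\b$, and AM--GM, with the equality analysis (which the paper dismisses as ``easy to deduce'') and the converse direction actually written out. One caution on the step you flag yourself: common enclosure of the $\b$-positions alone does \emph{not} force nesting, since two non-crossing pairs can have arcs that each contain the other's pair of endpoints (each arc containing the complement of the other); you must also invoke the exclusion of the $\bb$'s, which kills that configuration because the two arcs would then jointly cover every position and leave nowhere for the $k\geq 1$ letters $\bb$.
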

\begin{proof}
By Lemma~\ref{exafold}, there is a complete $\a$-fold $\F$ for $g$. By
Lemma~\ref{afld}, we have
$$\nu(g)=\sum_{l=1}^k b(w(i_l,j_l))$$ Observe that as $w(i,j)$ is a
sub-word of $g$, $b(w(i.j))\leq n_\b$. The number of terms in the above
sum is $n_\a$. Hence it follows that $\nu(g)\leq n_\a n_\b$. As the
geometric mean is at most the arithmetic mean and $n=2(n_\a+n_\b)$, it
follows that
$$\nu(g)\leq n_\a n_\b\leq \left(\frac{n}{4}\right)^2$$

In case of equality, each of the above inequalities must be an
equality. Hence as the arithmetic mean equals the geometric mean,
$n_\a=n_\b=n/4$. Further, for each pair $(i,j)$, $b(w(i,j))=n_\b$. it
follows that the letters in $w(i,j)$ include all the letters in $g$
that are $\b$ and none of the letters that are $\bb$. It is easy to
deduce that $g$ is conjugate to $[\a^{n/4},\b^{n/4}]$.
\end{proof}

\section{The Heisenberg invariant and Local minima}

To motivate our next (and most interesting) result, consider the word
$[\a^{n/4},\b^{n/4}]$ (or more generally a word of the form
$[\alpha^k,\beta^l]$). It is easy to see that any folding of this word
is either an $\a$-folding or a $\b$-folding. Hence to pass from a
complete $\a$-folding to a complete $\b$-folding at some intermediate
stage, the RNA strand must be completely unfolded.

The word $[\a^{n/4},\b^{n/4}]$ is characterised by the Heisenberg
invariant. However, this situation is too special. We show that if the
Heisenberg invariant is close to the maximum value in an appropriate
sense, then we have foldings $\F_1$ and $\F_2$ so that while passing
from $\F_1$ to $\F_2$, at some intermediate stage there are
significantly fewer pairs than in $\F_1$ and $\F_2$. To make this
precise we introduce some notation.

Fix a word $g$ with $a(g)=b(g)=0$ and let $n$, $n_\a$ and $n_\b$ be as
before. For a folding $\F$ of $g$, we define the deficiency $\rho(\F)$
as
$$\rho(\F)=1-\frac{2\vert\F\vert}{n}$$

This is the fraction of letters that are not in some pair. The
potential energy can be assumed to be a monotonically increasing
function of the deficiency.

We say that foldings $\F$ and $\F'$ are \emph{adjacent} if their
symmetric difference consists of a single pair. A \emph{path} from
$\F$ to $\F'$ is a sequence of foldings $\F=\F_0$, $\F_1$,
$\F_2$,\dots $\F_k=\F'$ such that $\F_l$ is adjacent to $\F_{l+1}$ for
$0\leq l<k$. This represents a sequence of steps by which an RNA
molecule can pass between one folding and another.

We can now state our result. Assume that $k$ is an integer with $k/n$
small. Let $\F$ be a complete $\alpha$-pairing for $g$ and $\F'$ a
complete $\b$-pairing. We show that if $\nu(g)$ is close to maximal,
then for any path from $\F$ to $\F'$, some path has much larger
deficiency than both $\F$ and $\F'$.

\begin{theorem}\label{saddle}
Suppose $\nu(g)\geq\left(\frac{n}{4}\right)^2-k^2$. Then,
\begin{enumerate}
\item $\rho(\F)\leq \frac{1}{2}+\frac{2k}{n}$ and $\rho(\F')\leq
\frac{1}{2}+\frac{2k}{n}$
\item Given any path $\F_0=\F$, $\F_1$, $\F_2$,\dots $\F_k=\F'$ from
  $\F$ to $\F'$, for some $l$, $1\leq l<k$,
  $\rho^2(\F_l)\geq\frac{1}{2}-24\left(\frac{k}{n}\right)^2$.
\end{enumerate}
\end{theorem}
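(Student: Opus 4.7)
The plan is to derive both parts from a single sharpened inequality: for any folding $\F_l$ of $g$ with $a_l$ $\a$-pairs and $b_l$ $\b$-pairs,
$$\nu(g) \leq n_\a n_\b - a_l b_l.$$
To establish this, I would invoke Lemma~\ref{fold} to write $\nu(g) = \nu(g_l) + \sum_{(i,j) \in \F_l} c(i,j)$, where $g_l$ is the balanced word obtained by canceling every pair of $\F_l$. Setting $x = n_\a - a_l$, $y = n_\b - b_l$, the residual $g_l$ contains $x$ occurrences each of $\a, \ba$ and $y$ of $\b, \bb$, so the proof of Theorem~\ref{rigid} applied to $g_l$ yields $\nu(g_l) \leq xy$. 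Next, for each $\a$-pair $(i,j)$ the no-nesting condition forces every other pair of $\F_l$ to lie entirely inside or entirely outside $w(i,j)$; an inner $\b$-pair contributes equal numbers of $\b$ and $\bb$ and so cancels in $b(w(i,j))$, leaving only at most $y$ unpaired $\b$'s and at most $y$ unpaired $\bb$'s, giving $|b(w(i,j))| \leq y$. Summing and arguing symmetrically for $\b$-pairs yields $|\sum c(i,j)| \leq a_l y + b_l x$, and therefore
$$\nu(g) \leq xy + a_l y + b_l x = (a_l+x)(b_l+y) - a_l b_l = n_\a n_\b - a_l b_l.$$

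For Part~(1), apply the sharpened bound to $\F$ (where $a_l = n_\a$ and $b_l = 0$): it reduces to $\nu(g) \leq n_\a n_\b = (n/4)^2 - \delta^2$, with $\delta := n_\a - n/4$. The hypothesis $\nu(g) \geq (n/4)^2 - k^2$ forces $\delta^2 \leq k^2$, so $n_\a \geq n/4 - k$ and $\rho(\F) = 1 - 2n_\a/n \leq 1/2 + 2k/n$. The bound for $\F'$ is symmetric.

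For Part~(2), consider the path $\F = \F_0, \F_1, \ldots, \F_K = \F'$, writing $K$ for the path length to avoid clash with the constant $k$. At each step exactly one of $a_l, b_l$ changes by $\pm 1$, so $a_l - b_l$ is integer-valued with $\pm 1$ increments, starts at $n_\a > 0$, and ends at $-n_\b < 0$; by the discrete intermediate value property some index $l^*$ satisfies $a_{l^*} = b_{l^*}$. At this step the sharpened inequality gives $a_{l^*}^2 = a_{l^*} b_{l^*} \leq n_\a n_\b - \nu(g) \leq k^2 - \delta^2 \leq k^2$, so $a_{l^*} \leq k$ and $|\F_{l^*}| = 2a_{l^*} \leq 2k$, whence $\rho(\F_{l^*}) \geq 1 - 4k/n$. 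The stated form then follows from $(1 - 4k/n)^2 - (1/2 - 24(k/n)^2) = 1/2 - 8(k/n) + 40(k/n)^2$, a quadratic in $k/n$ with negative discriminant $64 - 80 < 0$, hence nonnegative. The only genuinely substantive step is the sharpened inequality $\nu(g) \leq n_\a n_\b - a_l b_l$; the decisive combinatorial observation is that along any path one must pass through a folding with $a_l = b_l$, at which moment the quadratic bound $a_l b_l \leq k^2$ forces both counts to be at most $k$.
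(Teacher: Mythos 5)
Your proof is correct, and although it is assembled from the same ingredients as the paper's argument --- the telescoping formula of Lemma~\ref{fold}, the observation that a pair nested inside $w(i,j)$ contributes zero to $b(w(i,j))$, the bound $\nu\leq n_\a n_\b$ extracted from the proof of Theorem~\ref{rigid}, and the discrete intermediate-value argument producing an intermediate folding with equally many $\a$-pairs and $\b$-pairs --- you combine them differently and end up with a strictly stronger conclusion. The paper bounds the $\a$-part and $\b$-part of $\sum c(i_l,j_l)$ by $m(n_\b-m)$ and $m(n_\a-m)$ and then discards the dependence on $m$ via AM--GM, getting $\sum c\leq\frac{1}{2}\bigl((n/4)^2+k^2\bigr)$; the resulting deficit is pushed onto $\nu(g_k)$ and converted, via Theorem~\ref{rigid}, into the lower bound $\rho^2\geq\frac{1}{2}-24(k/n)^2$ on the deficiency. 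You instead retain the pair counts and package everything into the single inequality $\nu(g)\leq n_\a n_\b-a_lb_l$, which at the balanced folding yields $m^2\leq k^2$ directly: the intermediate folding has at most $2k$ pairs, so $\rho(\F_{l^*})\geq 1-4k/n$, which for small $k/n$ is far stronger than the stated bound and implies it by your discriminant computation. Your part (1) coincides with the paper's Lemma~\ref{balance}. Two points you leave implicit but which are harmless: passing from $\rho\geq 1-4k/n$ to $\rho^2\geq(1-4k/n)^2$ requires $k\leq n/4$, and the intermediate-value step needs $n_\a,n_\b>0$; in both cases, if the requirement fails then $\frac{1}{2}-24(k/n)^2<0$ and the claim is vacuous, so all cases are covered.
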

\begin{remark}
If we assume that $k/n$ is small, then Theorem~\ref{saddle} says that
deficiency of the foldings $\F$ and $\F'$ are not much more than
$1/2$, while that of some intermediate folding is not much less than
$1/\sqrt{2}$. Thus, the deficiency, and hence the potential energy,
increases significantly in passing from $\F$ to $\F'$
\end{remark}

\begin{proof}[Proof of Theorem~\ref{saddle}]
Assume henceforth that the hypothesis of Theorem~\ref{saddle} is
satisfied. Our first step is as in Theorem~\ref{rigid}.

\begin{lemma}\label{balance}
We have $n_\a\geq n/4-k$ and $n_\b\geq n/4-k$.
\end{lemma}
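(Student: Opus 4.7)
The plan is to extract from the proof of Theorem~\ref{rigid} the intermediate inequality $\nu(g) \leq n_\a n_\b$ and combine it with the hypothesis $\nu(g) \geq (n/4)^2 - k^2$, rather than passing to the weaker AM-GM bound. Concretely, by Proposition~\ref{exafold} we pick a complete $\a$-fold $\F$, and by Lemma~\ref{afld} we have
\[
\nu(g) \;=\; \sum_{l=1}^{n_\a} b(w(i_l,j_l)).
\]
Since each $w(i_l,j_l)$ is a subword of $g$, each term satisfies $b(w(i_l,j_l)) \leq n_\b$, and there are exactly $n_\a$ terms, so $\nu(g) \leq n_\a n_\b$.

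Next I would exploit the balance condition. Since $a(g)=b(g)=0$, the number of $\a$'s equals the number of $\ba$'s (contributing $2n_\a$ letters total) and similarly for $\b$'s, so $n = 2(n_\a+n_\b)$, i.e., $n_\a + n_\b = n/2$. Writing $n_\a = n/4 + t$ and $n_\b = n/4 - t$ (where $t$ may be a half-integer if $n/4$ is), this gives the identity
\[
n_\a n_\b \;=\; \bigl(n/4\bigr)^2 - t^2.
\]
Combining with the hypothesis yields $(n/4)^2 - k^2 \leq \nu(g) \leq (n/4)^2 - t^2$, hence $t^2 \leq k^2$, i.e., $|t| \leq k$. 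Therefore
\[
n_\a \;=\; n/4 + t \;\geq\; n/4 - k, \qquad n_\b \;=\; n/4 - t \;\geq\; n/4 - k,
\]
which is exactly the claim.

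There is no real obstacle here; the only subtlety is noticing that the proof of Theorem~\ref{rigid} already establishes $\nu(g) \leq n_\a n_\b$ as an intermediate step before invoking AM-GM, and that this stronger inequality is what pins down $n_\a$ and $n_\b$ individually rather than merely their sum. A symmetric argument using a complete $\b$-fold (via the analogue of Proposition~\ref{exafold} and Lemma~\ref{afld} for $\b$) would give the same bound and so is not needed; one application of the $\a$-fold version suffices.
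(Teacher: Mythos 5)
Your proof is correct and follows essentially the same route as the paper: both parametrize $n_\a$ and $n_\b$ as $n/4\pm t$, use the complete $\a$-fold together with Lemma~\ref{afld} to get $\nu(g)\leq n_\a n_\b=(n/4)^2-t^2$, and compare with the hypothesis to conclude $t^2\leq k^2$. No differences worth noting.
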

\begin{proof}
Let $n_\a=n/4-p$ with $p$ an integer. Then $n_\b=n/4+p$. As
$a(g)=b(g)=0$, applying Theorem~\ref{afld} for the folding $\F$, we get 
$$\nu(g)=\sum_{l=1}^k b(w(i_l,j_l))$$ As $\F$ is a complete $\a$-fold,
the number of terms of the above sum is $n/4-p$. As each word $w(i,j)$
is a sub-word of $g$, $b(w(i,j))\leq n_\b=n/4+p$. Thus,
$$\nu(g)\leq \left(\frac{n}{4}-p\right)\left(\frac{n}{4}+p\right) =
\left(\frac{n}{4}\right)^2-p^2$$

As $\nu(g)\geq\left(\frac{n}{4}\right)^2-k^2$, $p^2\leq k^2$, from
which the lemma follows.
\end{proof}

As $\F$ is a complete $\a$-pairing, $\vert\F\vert=n_\a$. Hence, as
$n_\a\geq n/4-k$, an easy calculation shows that $\rho(\F)\leq
\frac{1}{2}+\frac{2k}{n}$ as claimed. Similarly, $\rho(\F')\leq
\frac{1}{2}+\frac{2k}{n}$.

We now turn to the second part of the Theorem. We shall first make
some observations. Let $\F''$ be a folding of $g$ with $m_\a$ pairs
involving $\a$ and $m_\b$ pairs involving $\b$. We shall use the
formula of Lemma~\ref{fold}, namely
$$\nu(g)=\nu(g_k)+\sum_{l=1}^k c(i_l,j_l)$$

Consider first the sum $\sum_{l=1}^k c(i_l,j_l)$. This has two kinds
of terms corresponding to $\a$-pairs and $\b$-pairs. We consider these
separately. First consider the $m_\a$-terms corresponding to
$(i,j)\in\F''$ with $l_i=\a$. Then $c(i,j)=b(w(i,j))$.

The key observation is that $b(w(i,j))\leq n_\b-m_\b$. This is because
$b(w(i,j))$ is the difference between the number of letters in $w(i,j)$
that are $\b$ and the number that are $\bb$. As there is no nesting,
if $(i',j')\in \F''$ with $l_{i'}=\b$ contained in $w(i,j)$, then
$l_{j'}=\bb$ is also contained in $w(i,j)$. Hence the net contribution
of the letters of the pair to $b(w(i,j))$ is zero. 

Thus, the sum of the terms corresponding to pairs $(i,j)$ with
$l_i=\a$ is at most $m_\a(n_\b-m_\b)$. We have a similar result for the
$\beta$-pairs.

Now consider a path as in the hypothesis of the Theorem. It is easy to
see that for some $l$, the number of $\a$-pairs in $\F_l$ is equal to
the number of $\b$-pairs. Denote this number by $m$. We shall find a
lower bound for $\rho(\F_l)$. Let the number of elements that are not
in any pair be $d$. If $g_k$ is as above, then $g_k$ has $d$ elements.

Consider the equation 
\begin{equation}\label{sum}
\nu(g)=\nu(g_k)+\sum_{l=1}^k c(i_l,j_l)
\end{equation}
for $\F_l$. As above, we consider separately the terms corresponding
to $\a$-pairs and $\b$-pairs. Let $p$ be as in the proof of
Lemma~\ref{balance}. We have seen that the total contribution of the
$\a$-pairs is at most $m(n_\b-m)=m(n/4+p-m)$. By the inequality
between arithmetic and geometric means,
$$m(n/4+p-m)\leq\frac{1}{4}\left(\frac{n}{4}+p\right)^2$$
Similarly, the contribution of the $\b$-terms can be bounded by 
$$m(n/4-p-m)\leq\frac{1}{4}\left(\frac{n}{4}-p\right)^2$$
Hence we get an upper bound on the sum 
\begin{equation}\label{upbound}
\sum_{l=1}^k c(i_l,j_l)\leq
\frac{1}{2}\left(\left(\frac{n}{4}\right)^2+k^2\right)
\end{equation}
using $p^2\leq k^2$ as
in the proof of Lemma~\ref{balance}.

By hypothesis, $\nu(g)\geq\left(\frac{n}{4}\right)^2-k^2$. Hence, from
Equation~\ref{sum} and Equation~\ref{upbound} we get
\begin{equation}\label{lowbound}
\nu(g_k)\geq \frac{1}{2}\left(\left(\frac{n}{4}\right)^2-3k^2\right)
\end{equation}

By Theorem~\ref{rigid}, $\nu(g_k)\leq(d/4)^2$, where $d$ is the number of
unpaired letters. By the definition of deficiency,
$d=n\rho(\F_l)$. Thus, from Equation~\ref{lowbound} we obtain the
inequality
$$\left(\frac{n\rho(\F_l)}{4}\right)^2\geq
\frac{1}{2}\left(\left(\frac{n}{4}\right)^2-3k^2\right) $$ or
$$\rho^2(\F_l)\geq\frac{1}{2}-24\left(\frac{k}{n}\right)^2$$
as claimed.

\end{proof}

One does not expect in general for the condition $a(g)=b(g)=0$ to be
satisfied. However, one can use Theorem~\ref{saddle} provided $a(g)$
and $b(g)$ are small compared to $n$ and
$\nu(g)\geq\left(\frac{n}{4}\right)^2-k^2$ with $k/n$ small. To do
this, consider the word $g'=\a^{-a(g)}g\b^{-b(g)}$. By
Proposition~\ref{unbal}, $\nu(g')=\nu(g)$.

The word $g'$ has length $n'=n+a(g)+b(g)$. The hypothesis
$\nu(g)\geq\left(\frac{n}{4}\right)^2-k^2$ can be rephrased as 
$$\nu(g')\geq\left(\frac{n'}{4}\right)^2-{k'}^2$$
with
$${k'}^2=k^2+\frac{n'(a(g)+b(g))}{8}-\left(\frac{a(g)+b(g)}{4}\right)^2$$

If $k$, $a(g)$ and $b(g)$ are all small compared to $n$, then it
follows that $k'$ is small compared to $n'$ (as well as compared to
$n$). Hence we can apply Theorem~\ref{saddle} in this case (to the
element $g'$ and deduce for $g$).  

We can see that generically $a(g)$ and $b(g)$ are comparable to
$\sqrt{n}$. Namely, $a$ and $b$ for a random string can be veiwed as
the results of (independent) one-dimensional random walks with
$n_\alpha$ and $n_\beta$ steps. Hence $a(g)$ and $b(g)$ are
generically of the order of $\sqrt{n_\alpha}$ and $\sqrt{n_\beta}$,
respectively.

Results analogous to those of this section continue to hold if we
modify our model so that nearby bases do not pair (as is the case
biologically). As further restrictions only increase the number of
unpaired bases (in particular of intermediate stages), it suffices to
give lower bounds for the number of paired bases (hence upper bounds
for the number of unpaired bases) for appropriate $\alpha$-foldings
and $\beta$-foldings. An examination of our proof yields such a
bound. Namely, if the Heisenberg invariant is close to its maximal
value $(n/4)^2$, then (in a sense that can be made precise) for a
complete $\alpha$-folding most terms $b(w(i_l,j_l)$ in
Lemma~\ref{afld} must be close to $n/4$. This means that most bonds in
the complete $\alpha$-folding are between bases that are not close to
each other. Hence we can obtain a lower bound on the number of bonds
in an $\alpha$-folding without nearby bases paired. The case of
$\beta$-foldings is similar.

\section{Lower bounds on deficiency}

We now turn to another important application of the Heisenberg
invariant. Let $g$ be a word in the free group $F$ with
$a(g)=b(g)=0$. As the Heisenberg invariant is a measure of
non-triviality, one expects that if $\nu(g)$ is large then the
deficiency of any folding $\F$ of $g$ is large. We now prove such a
result. 

\begin{theorem}
For any folding $\F$ of $g$, if $\rho=\rho(\F)$ 
$$\rho-\frac{3\rho^2}{4}\geq \frac{4\nu(g)}{n^2}$$
\end{theorem}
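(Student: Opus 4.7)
The plan is to extract an upper bound on $\nu(g)$ in terms of the combinatorial data of the folding $\F$ by combining Lemma~\ref{fold} with the rigidity bound of Theorem~\ref{rigid}, and then to simplify via a one-line AM--GM-style inequality.

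Fix a folding $\F$ with $m_\a$ $\a$-pairs and $m_\b$ $\b$-pairs, and write $m=m_\a+m_\b=n(1-\rho)/2$ so that the number of unpaired letters is $d=n\rho$. Let $u=n_\a-m_\a$ and $v=n_\b-m_\b$ be the numbers of unpaired letters of each base type; note $u+v=d/2$. First I would apply Lemma~\ref{fold} to obtain
$$\nu(g)=\nu(g_k)+\sum_{l=1}^k c(i_l,j_l),$$
where $g_k$ is the word of length $d$ obtained after cancelling every pair of $\F$. Since cancelling a pair kills a letter together with its inverse, $a(g_k)=a(g)=0$ and $b(g_k)=b(g)=0$, so Theorem~\ref{rigid} applies to $g_k$ and yields $\nu(g_k)\leq (d/4)^2=n^2\rho^2/16$.

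Next, I would bound the $c(i_l,j_l)$ term-by-term using the argument already used in the proof of Theorem~\ref{saddle}: for an $\a$-pair, $c(i,j)=b(w(i,j))$, and by the no-nesting condition every $\b$-pair contained in $w(i,j)$ contributes $0$ to $b(w(i,j))$, so $b(w(i,j))\leq v$. Symmetrically, for a $\b$-pair $c(i,j)=-a(w(i,j))\leq u$. Summing, $\sum c(i_l,j_l)\leq m_\a v+m_\b u$. The one-line inequality
$$m_\a v+m_\b u\leq m_\a v+m_\b u+m_\a u+m_\b v=(m_\a+m_\b)(u+v)=m\cdot d/2=\frac{n^2\rho(1-\rho)}{4}$$
(valid because $m_\a u,m_\b v\geq 0$) then collapses the sum to a clean expression in $\rho$.

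Finally, I would combine the two bounds:
$$\nu(g)\leq \frac{n^2\rho^2}{16}+\frac{n^2\rho(1-\rho)}{4}=\frac{n^2\rho}{4}-\frac{3n^2\rho^2}{16}=\frac{n^2}{4}\left(\rho-\frac{3\rho^2}{4}\right),$$
which rearranges to the claimed inequality $\rho-3\rho^2/4\geq 4\nu(g)/n^2$. There is no serious obstacle; the only subtlety is confirming that cancellation preserves the hypothesis $a=b=0$ so that Theorem~\ref{rigid} is legitimately available for $g_k$, and that the no-nesting argument justifying $b(w(i,j))\leq v$ (and its $\b$-analogue) really does apply to \emph{any} folding, not just the complete $\a$- or $\b$-folds where it was previously used.
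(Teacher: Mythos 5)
Your proof is correct and follows essentially the same route as the paper: decompose $\nu(g)$ via Lemma~\ref{fold}, bound $\nu(g_k)$ by Theorem~\ref{rigid} (noting that cancellation preserves $a=b=0$), and bound the correction sum by $d(n-d)/4$. The only, harmless, difference is in that last step: the paper bounds each term $c(i_l,j_l)$ by $d/2$ using the complementary-subword trick, whereas you bound $\a$-pair terms by $v=n_\b-m_\b$ and $\b$-pair terms by $u=n_\a-m_\a$ and then use $m_\a v+m_\b u\leq(m_\a+m_\b)(u+v)$; both give the identical estimate and the same final inequality.
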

\begin{proof}
As before, we use the formula of Lemma~\ref{fold}, namely
$$\nu(g)=\nu(g_k)+\sum_{l=1}^k c(i_l,j_l)$$

Let $d=n\rho(g)$ be the number of letters that are unpaired in $\F$
(hence the number of letters of $g_k$). We claim that $c(i_l,j_l)\leq
d/2$. This follows if the number of unpaired letters of the sub-word
$w(i_l,j_l)$ is at most $d/2$ as pairs have canceling contributions
to $a(w(i_l,j_l))$ and $b(w(i_l,j_l))$ (as in the Proof of
Theorem~\ref{saddle}). Otherwise $w(j_l,i_l)$ has at most $d/2$
unpaired letters, from which we can deduce the result as
$a(w(j_l,i_l))=-a(w(i_l,j_l))$ and $b(w(j_l,i_l))=-b(w(i,j))$ since
$a(g)=b(g)=0$ and $l_{i_l}$ and $l_{j_l}$ are a canceling pair.

The number of pairs in $\F$ is $(n-d)/2$. Further, by
Theorem~\ref{rigid}, $\nu(g_k)\leq (\frac{d}{4})^2$. Hence, as
$d=n\rho(g)$, we get
$$\nu(g)\leq \left(\frac{d}{4}\right)^2+\frac{d(n-d)}{4}=
\frac{n^2}{4}\left(\rho(\F)-\frac{3\rho^2}{4}\right)$$
from which the claim follows.
\end{proof}

\section{The Milnor invariants and other extensions}\label{S:mil}

We now turn to the higher Milnor invariants. We shall be very sketchy
in this section as our goal is to indicate further extensions of our
methods.

We begin by recalling the conceptual scheme for defining the Milnor
invariants. Suppose $g$ is an element of the free group $F$. The
Milnor invariants (in our situation) are measures of how different $g$
is from the identity. Further, these are invariant under conjugacy and
are additive. Thus, they are very well suited for studying RNA
folding.

The Milnor invariants are defined in terms of the lower central
series. The first Milnor invariants correspond to the image of $g$ in
the abelianisation $F/F_2$. If this is trivial, then $g$ is in $F_2$,
and we consider its image in $F_2/F_3$. This is the Heisenberg
invariant. 

We proceed inductively, with the Milnor invariants of order $n$
defined if those of lower order vanish. Namely, if the invariants of
order less than $n$ vanish, then $g\in F_n$. We consider the image of
$g$ in $F_n/F_{n+1}$, which is a finitely generated free abelian group. By
choosing a basis for this group, we get finitely many integers
determining the image. Thus, we have an infinite series of
invariants. It follows from the Magnus expansion homomorphism (which
we recall below) that $\bigcap_{i=1}^\infty F_i=\{1\}$, hence all the
Milnor invariants of $g$ vanish if and only if $g$ is trivial.

In our situation, we would like to define the Milnor invariants of
order $n$ up to an error corresponding to the lower order Milnor
invariants, as we did in the case of the Heisenberg invariant. We can
do this as we have a given basis for the free group $F$. Namely, we
choose and fix sections $s_k:F/F_k\to F$, i.e. functions between the
underlying sets of $F/F_k$ and $F$ (not preserving the algebraic
structure) so that for the quotient homomorphism $q_k:F\to F/F_k$, we
have $q_k\circ s_k:F/F_k\to F/F_k$ is the identity. Then for an
arbitrary $g\in F$, $g\cdot(s_n\circ q_n(g))^{-1}$ is in $F_n$, so we
can define Milnor invariants by considering the image of $g(s_n\circ
q_n(g))^{-1}$ in $F_n/F_{n+1}$. This is what we did in the case of the
Heisenberg invariant here, using the section $(a,b)\mapsto
\a^a\b^b$. As we are not studying the higher invariants in detail
here, we do not construct explicit sections.

We remark that in the original topological context, there is no
canonical basis and hence one cannot make a well-defined choice of
section. However, the Milnor invariants are defined modulo those of
lower order. Such arithmetical considerations are not likely to be
fruitful in the context of RNA. 

An explicit description of the Milnor invariants, allowing for
efficient computation, can be given in terms of the so called
\emph{Magnus expansion homomorphism}. Consider formal power series
with integer coefficients in two non-commuting variables $X$ and $Y$
(i.e., where $XY$ is not $YX$ for example). These form a ring
$\Z[[X,Y]]$. The Magnus expansion homomorphism $M$ is an injective
homomorphism from the free group $F$ to the multiplicative group of
invertible elements of $\Z[[X,Y]]$. As is well known, the invertible
elements of $\Z[[X,Y]]$ are the formal power series with constant term
$1$. The Magnus homomorphism is the unique homomorphism such that the
image of the generators $\a$ and $\b$ satisfy
$$M(\a)=1+X$$
$$M(\b)=1+Y$$

By properties of formal power series, we have
$$M(\ba)=1-X+X^2-X^3+\dots$$
$$M(\bb)=1-Y+Y^2-Y^3+\dots$$

In general, for a word $g$, we multiply the images of the letters of
$g$ to obtain $M(g)$. 

For $g\in F$, the constant term of $M(g)$ is $1$. The coefficients of
$X$ and $Y$ are $a(g)$ and $b(g)$, and thus determine and are
determined by the abelianisation. In particular, the coefficients
vanish if and only if $a(g)=b(g)=0$.

Suppose $a(g)=b(g)=0$. Then it is known that the term of degree two is
of the form $c(XY-YX)$. The Heisenberg invariant is $\nu(g)=c$. More
generally, $g\in F_n$ if and only if the terms of degree less than $n$
(except the constant term) vanish. As a consequence, we have a
well-defined homomorphism of abelian groups from $F_n/F_{n+1}$ to
homogeneous polynomials of degree $n$ in the non-commuting variables
$X$ and $Y$. This is injective but not surjective. The image consists
of so called \emph{Lie elements}, for which an explicit basis can be
given.

Thus, Milnor invariants can be defined as linear combinations of
coefficients of the polynomial $M(g)$. This expression is not unique
as the Lie elements do not generate all the homogeneous polynomials of
degree $n$. For instance, the Heisenberg invariant can be defined as
the coefficient of $XY$ or the negative of the coefficient of $YX$. If
$g$ is in $F_n$, then all these expressions give the same value. For a
general element $g$, the choice of a linear combination is analogous
to the choice of a section $s_n$ in the earlier description. As in the
case of the sections $s_n$, we do not give details of the relevant
bases and coefficients.

We finally turn to the original topological context which motivated
this work. A \emph{link} $L$ in a smooth $3$-dimensional manifold is a
disjoint union of smoothly embedded circles. An \emph{unlink} is a
link $L$ so that the components of $L$ bound disjoint, smoothly
embedded discs. There are three well studied relations on links -
those of isotopy, homotopy and concordance. We consider links up to
these relations.

Firstly, let $L'$ be an unlink with two components. The fundamental
group of the complement is a free group on two generators, and hence
can be identified with $F$. Given a word $g$ in $F$, consider a
three-component link $L$, with the third component a curve $\gamma$
representing the element $g$ (up to conjugation) in the complement of
$L'$. This is not well-defined, but we may study properties of this
link that depend only on $g$.

The first observation is that the link is defined up to \emph{link
homotopy}, i.e., changing the link through a family so that each
component is allowed to cross itself but not others. For such
three-component links, the link is determined by exactly three
invariants. These correspond to $a(g)$, $b(g)$ and $\nu(g)$. We get
further invariants by observing that we can keep the first two
components fixed and only allow $\gamma$ to cross itself. We expect
that it will be fruitful to study the link $L$ up to concordance
(which implies link homotopy). Link homotopy, isotopy and concordance
have been extensively studied. We refer to~\cite{CO}--\cite{St} for
some of the fundamental results concerning these.

\section{Concluding remarks}

In this article, we have constructed an easily computed number, the
Heisenberg invariant, associated to a strand of RNA, and showed that
it is related to a dynamically important property of Watson-Crick
pairing. Values of the invariant large enough to yield dynamically
interesting results are not generic. However, in biological systems it
is important to understand molecules with special properties, as
evolution ensures that living systems are not generic but have
compositions giving desirable properties.

We end with a conjecture regarding the Milnor invariants of order one
more than the Heisenberg invariant. We have seen that the Heisenberg
invariant being large implies that we have minima that are widely
separated. We expect that the next invariants being large implies that
there are minimax paths between minima that are widely separated. This is
significant in the context of catalysis.

\bibliographystyle{amsplain}

\end{document}